\newtheorem{definition}{Definition}
\newtheorem{theorem}{Theorem}
\newtheorem{lemma}{Lemma}
\newtheorem{assumption}{Assumption}
\newtheorem{example}{Example}
\newtheorem{stassumption}[assumption]{Standing Assumption}
\newcommand{\argmin}{\mathop{\mathrm{argmin}}\limits}
\newcommand{\Id}{ \textrm{Id}}
\newcommand{\bR} { {\mathbb R}}
\newcommand{\bN} { {\mathbb N}}
\newcommand{\fix} { {\mathrm{fix}}}
\newcommand{\zer} { {\mathrm{zer}}}
\newcommand{\diag} { {\mathrm{diag}}}
\newcommand{\prox}{\textrm{prox}}
\newcommand{\proj}{\textrm{proj}}
\newcommand{\ca}[1]{\mathcal{#1}}
\newcommand{\bld}[1]{\boldsymbol{#1}}
\newcommand{\1}{\bld 1}
\newcommand{\0}{\bld 0}
\newcommand{\col }{\mathrm{col}}
\title{\LARGE \bf
Time-varying constrained proximal type dynamics\\ in multi-agent network games 
}
\author{Carlo Cenedese \and Giuseppe Belgioioso \and Sergio Grammatico \and Ming Cao % <-this % stops a space
\thanks{C. Cenedese and M. Cao are with the Jan C. Wilems Center for Systems and Control, ENTEG, Faculty of Science and Engineering, University of Groningen, The Netherlands	({\texttt{\{c.cenedese, m.cao\}@rug.nl}})
 G.Belgioioso is with the Control System group, TU Eindhoven, 5600 MB Eindhoven, The Netherlands
(\texttt{g.belgioioso@tue.nl}).
 S. Grammatico is with the Delft Center for Systems and Control, TU Delft, The Netherlands
(\texttt{{s.grammatico@tudelft.nl}}).
This work was partially supported by the EU Project \lq MatchIT' (82203), NWO under research project OMEGA (613.001.702) and P2P-TALES (647.003.003) and by the ERC under research project COSMOS (802348).}
}
\begin{document}

\maketitle
\thispagestyle{empty}
\pagestyle{empty}

%% ABSTRACT %%%%%%%%%%%%%%%%%%%%%%%%%%%%%%%%%%%%%%%%%%%%%%%%%%%%%%%%%%%%%%%%%%%%%%%%%%%%%%%
\begin{abstract}
In this paper, we study multi-agent network games subject to affine time-varying coupling constraints and a time-varying communication network. We focus on the class of games adopting proximal dynamics and study their convergence to a persistent equilibrium. The assumptions considered to solve the problem are discussed and motivated. We develop an iterative equilibrium seeking algorithm, using only local information, that converges to a special class of game equilibria. Its derivation is motivated by several examples, showing that the original game dynamics fail to converge. Finally, we apply the designed algorithm to solve a constrained consensus problem, validating the theoretical results. \end{abstract}
% INTRODUCTION %%%%%%%%%%%%%%%%%%%%%%%%%%%%%%%%%%%%%%%%%%%%%%%%%%%%%%%%%%%%%%%%%%%%%%%%%%%%%%%
\section{Introduction}

\subsection{Motivation: Multi-agent decision making over networks}

In multi-agent decision making over networks, all the decision makers, in short, \textit{agents}, share their information only with a selected number of agents. 
In particular, the agents' state (or decision) is the result of a \textit{local decision making} process, e.g. a constrained optimization problem, and a \textit{distributed communication} with the neighboring agents, defined by the communication network. In many problems, the goal of the agents is reaching a collective equilibrium state, where no agent can benefit from changing its state. 
The local interation between the agents is exploited in opinion dynamics to model the evolution of a population's collective opinion as an emerging phenomenon of the local interactions, see \cite{grammatico:2017:opinion_dynamics_are_prox,Ghaderi2014, etesami:basar:15}. Another interesting consequence of the communication structure is that the agents keep their own data private, exchanging  information only with selected agents. This characteristic is of particular interest in, for example, traffic and information networks problems \cite{jaina:walrand:10} or in the charging scheduling of electric vehicles \cite{cenedese:belg:gra:cao:2019:Asynch_ECC,callaway:09}. This class of problems arises also in other applications, e.g., in smart grids \cite{dorfer:simpson-porco:bullo:16, grammatico:18tcns} and sensor network \cite{martinez:bullo:cortes:frazzoli:07}, \cite{simonetto:leus:2014:distributed_ML_sensor_network}.

%In this work, we study the particular instance of this problem in which the communication network and the constraints associated to the decision making process are both time-varying.

\subsection{Literature overview: Multi-agent optimization and multi-agent network games}

In this work, we study a particular instance of the problem introduced above, namely a multi-agent network game, where the communication network and the constraints between the agents are both time-varying.
Multi-agent network games arise from the well established field of distributed optimization and equilibrium seeking over networks. In the past years, several results were proposed for optimization problems subject to a time-varying communication network: in   \cite{nedic:2015:distr_opt_over_time_var_directed_graphs} the subgradients of the cost functions are bounded and the communication is described by a strongly connected sequence of directed graphs, while in   \cite{nedic:alshevsky:2017:achieving_geometric_convergence_distr_opt} the cost functions are assumed to be continuously differentiable and a linearly convergent algorithm is designed under the assumption of a time-varying undirected communication network. Another approach, explored in \cite{li:marden:2012:games_for_distributed_optimization}, is to construct a game, whose emerging behavior solves the optimization problems. In this case, the cost functions are differentiable and the communication ruled by an undirected time-varying graph connected over time.

The problem of noncooperative multi-agent games, subject to coupling constrains, was firstly studied in \cite{facchinei:kanzow:07}, under the assummptions of  continuosly differentiable cost functions  and no network structure between the agents. In the past years, several researchers focused on this class of problems providing many results for games over networks, e.g.,  in \cite{yi_pavel:2017:disribiuted_primal_dual_conf,belgioioso:grammatico:2018:proj_grad_algorithm_for_GNE,cenedese:belg:gra:cao:2019:Asynch_ECC} where the communication network is always assumed undirected, while the cost functions are chosen either differentiable or continuously differentiable. Moreover, some authors also focused on the class of noncooperative games over time-varying communication network, in particular on the unconstrained case. 
For example, in \cite{koshal:nedic:shanbhag:16} differentiable and strictly convex cost functions with Lipschitz continuous gradient were considered, where the sequence of time-varying communication networks was repeatedly strongly connected, and the associated adjacency matrices doubly stochastic.

\subsection{Paper contribution}
A complete formulation of multi-agent network games, subject to proximal type dynamics, can be found in \cite{grammatico:18tcns} where the unconstrained case is studied for a time-varying strongly connected communication network, described by a doubly stochastic adjacency matrix. In \cite{cenedese:kawano:grammatico:cao:2018:time_varying_proximal_dynamics,cenedese:2019:arXiv}, the condition on the double stochasticity of the adjacency matrix was relaxed, in the first case by means of a dwell time. Notice that these types of games can also be rephrased as paracontracions; in this framework, the work in \cite{fullmer:morse:2018:common_fixed_point_finite_family_paracontraction} provided convergence for repeatedly jointly connected digraphs. 
Iterative equilibrium seeking algorithms were developed for  constrained multi-agent network  games in \cite{grammatico:18tcns,cenedese:2019:arXiv} under the assumption of a static communication network.

In this work, we aim to address the problem of a constrained multi-agent network games subject to a time-varying communication network. In particular, we first discuss the convergence of the game and motivate the technical assumption needed to ensure the existence of an equilibrium, and then we develop an equilibrium seeking algorithm that achieves global convergence for the game at hand. The main difference with the work in \cite{cenedese:2019:arXiv} is the presence of both time-varying communication network and time-varying constraints, and this generalization leads to several technical challenges, requiring a more involved  convergence analysis. 
%
%The organization of the paper is as follows: in Section~\ref{sec:problem_formulation} the problem is introduced and formalized, then the equilibrium concept and its existence are discussed. The main result of the paper is reported in Section~\ref{sec:conv_result}, where the final form of the equilibrium seeking algorithm is described and its convergence is proven.  The application of the algorithm to a problem of constrained consensus is proposed in Section~\ref{sec:simulation}. The paper terminates with Section~\ref{sec:conclusion}, where the conclusions and the outlooks are discussed. All the proofs and the algorithm derivation are provided in the Appendix.

\section{NOTATION}
\subsection{Basic notation}
The set of real, positive, and non-negative numbers are denoted by $\mathbb{R}$, $\mathbb{R}_{>0}$ and $\mathbb{R}_{\geq 0}$, respectively; $\overline{\mathbb{R}}:=\mathbb{R}\cup \{\infty\}$. The set of natural numbers is denoted by $\mathbb{N}$. For a square matrix $A \in \bR^{n\times n}$, its transpose is denoted by $A^\top$, $[A]_{i}$ denotes the $i$-th row of the matrix, and  $[A]_{ij}$ the element in the $i$-th row and $j$-th column.Also, $A\succ 0 $ ($A\succeq 0 $)  stands for a symmetric and positive definite (semidefinite) matrix, while $>$ ($\geq$) describes an element wise inequality. $A\otimes B$ is the Kronecker product of the matrices $A$ and $B$.   The identity matrix is denoted by~$I_n\in\bR^{n\times n}$, and $\0$ ($\1$) represents the vector/matrix with only $0$ ($1$) elements. For $x_1,\dots,x_N\in\mathbb{R}^n$ and $\ca N=\{1,\dots,N \}$, the collective vector is denoted  as $\boldsymbol{x}:=\mathrm{col}((x_i )_{i\in\ca N})=[x_1^\top,\dots ,x_N^\top ]^\top$ and $\bld x_{-i}:=\col(( x_j )_{j\in\ca N\setminus \{i\}})=[x_1^\top,\dots,x_{i-1}^\top,x_{i+1}^\top,\dots,x_{N}^\top]^\top$. Given the  $N$ operators $A_1,\dots,A_N$, $\diag(A_1,\dots,A_N)$ denotes a block-diagonal operators with $A_1,\dots,A_N$ as diagonal elements. 
%For vectors $x,y \in \bR^n$ and a positive definite matrix $ Q \in\bR^{n\times n}$, the weighted inner product and norm are denoted by $\langle x, Qy\rangle$ and $\lVert x \rVert_{Q}=\sqrt{\langle x, Qy\rangle}$, respectively; the induced matrix norm is denoted by $\lVert A\rVert_{Q}$. For $Q=I_{n}$, the standard inner product, Euclidean norm, and Frobenius norm are obtained. %A real $n$ dimensional Hilbert space obtained by endowing $\mathcal H=(\bR^n,\lVert\cdot\rVert)$  with the product $\langle x | y\rangle_{Q}$ is denoted by $\mathcal{H}_Q$.
The Cartesian product of the sets $\Omega_1, \dots, \Omega_N$ is described by $\prod^N_{i=1} \Omega_i$.  Given two vectors $x,y \in \bR^n$ and a symmetric and positive definite matrix $ Q \succ 0$, the weighted inner product and norm are denoted by $\langle\,  x \, | \, y \,\rangle_{Q}$ and $\lVert x \rVert_{Q}$, respectively; the $Q-$induced matrix norm is denoted by $\lVert A\rVert_{Q}$. A real $n$ dimensional Hilbert space obtained by endowing $\mathcal H=(\bR^n,\lVert \, \cdot \, \rVert)$  with the product $\langle\,  x \, | \, y \,\rangle_{Q}$ is denoted by $\mathcal{H}_Q$.

\subsection{Operator-theoretic notations and definitions}
The identity operator is defined by $\Id(\cdot)$. The indicator function $\iota_\mathcal{C}:\bR^n\rightarrow[0,+\infty]$ of $\mathcal{C}\subseteq \bR^n$ is defined as $\iota_\mathcal{C}(x)=0$ if $x\in\mathcal{C}$; $+\infty$  otherwise.
The set valued mapping $N_{\ca C}:\bR^n\rightrightarrows \bR^n$ stands for the normal cone to the set $\mathcal{C}\subseteq \bR^n$, that is $N_{\ca C}(x)= \{ u\in\bR^n \,|\, \mathrm{sup}\langle \ca C-x,u \rangle\leq 0\}$ if $x \in \ca C$ and  $\varnothing$ otherwise. The graph of a set valued mapping $\ca A:\ca X\rightrightarrows \ca Y$ is $\mathrm{gra}(\ca A):= \{ (x,u)\in \ca X\times \ca Y\, |\, u\in\ca A (x)  \}$. For a function $\phi:\bR^n\rightarrow\overline{\mathbb{R}}$, define $\mathrm{dom}(\phi):=\{x\in\bR^n|f(x)<+\infty\}$ and its subdifferential set-valued mapping, $\partial \phi:\mathrm{dom}(\phi)\rightrightarrows\bR^n$, $\partial \phi(x):=\{ u\in \bR^n | \: \langle y-x|u\rangle+\phi(x)\leq \phi(y)\, , \: \forall y\in\mathrm{dom}(\phi)\}$.  The projection operator over a closed set $S\subseteq \bR^n$ is $\textrm{proj}_S(x):\bR^n\rightarrow S$ and it is defined as $\textrm{proj}_S(x):=\mathrm{argmin}_{y\in S}\lVert y - x \rVert^2$. The proximal operator $\mathrm{prox}_f(x):\bR^n\rightarrow\mathrm{dom}(f)$ is defined by $\mathrm{prox}_f(x):=\mathrm{argmin}_{y\in\bR^n}f(y)+\textstyle\frac{1}{2}\lVert x-y\rVert^2$.
A set valued mapping $\ca F:\bR^n\rightrightarrows \bR^n$ is $\ell$-Lipschitz continuous with $\ell>0$, if $\lVert u-v \rVert \leq \ell \lVert x-y \rVert$ for all $(x,u)\, ,\,(y,v)\in\mathrm{gra}(\ca F)$; $\ca F$ is (strictly) monotone if for all $(x,u),(y,v)\in\mathrm{gra}(\ca F)$ $\langle u-v,x-y\rangle \geq (>)0$ holds, and  maximally monotone if there is no monotone operator with a graph that strictly contains $\mathrm{gra}(\ca F)$; $\ca F$ is $\alpha$-strongly monotone if for all $(x,u),(y,v)\in\mathrm{gra}(\ca F)$ it holds $\langle x-y, u-v\rangle \geq \alpha \lVert x-y \rVert^2$. $\mathrm{J}_{\ca F} :=(\Id+\ca F)^{-1}$ denotes the resolvent mapping of $\ca F$. Let $\textrm{fix}(\ca A):=\{x \in \mathbb R^n | \, x \in \ca F(x)  \}$ and $\textrm{zer}(\ca F):=\{x \in \mathbb R^n | \,0 \in \ca F(x)  \}$ denote the set of fixed points and zeros of $\ca F$, respectively.
The operator $\ca A:\bR^n\rightarrow\bR^n$  is $\eta$-averaged  ($\eta$-AVG) in $\ca H_Q$, with $\eta\in(0,1)$, if $\lVert \ca A(x)-\ca A(y) \rVert^2_Q \leq \lVert x-y\rVert^2_Q-\frac{1-\eta}{\eta}\lVert (\Id-\ca A)(x)-(\Id-\ca A)(y)  \rVert^2_Q$, for all $x,y\in\bR^n$; $\ca A$ is nonexpansive (NE) if $1$-AVG; $\ca A$ is firmly nonexpansive (FNE) if $\frac{1}{2}$-AVG; $\ca A$ is $\beta$-cocoercive if $\beta\ca A$ is $\frac{1}{2}$-AVG (i.e., FNE).
The operator $\ca A$ belongs to the class $\mathfrak{ I}$ in $\ca H_P$ if and only if $\mathrm{dom}(\ca A) = \bR^n$ and for all $y\in\fix(\ca A)$ and $x\in\bR^n$ it holds $\lVert x-\ca A x \rVert_P \leq \langle x-\ca A x, x-y\rangle_P$. Several type of operators belongs to this class, e.g. FNE operators and the resolvent of a maximally monotone operator. We refer to \cite{combettes:01} for more properties of operators of class $\mathfrak{ I}$.
%% PROBLEM STATEMENT %%%%%%%%%%%%%%%%%%%%%%%%%%%%%%%%%%%%%%%%%%%%%%%%%
\section{Mathematical setup and problem formulation}
\label{sec:problem_formulation}

\subsection{Mathematical formulation}
\label{subsec:math_formulation}
%In this section, we formalize the the problem of multi agent network games subject to time varying coupling constraints and communication network. 
We consider $N$ players (or agents) taking part in a game.
A constrained network game is defined by three main components: the constraints each players has to satisfy, the cost functions to be minimized and the communication network.

The constraints can be divided in two types: local and coupling.  At every time instant $k\in\bN$, each agent $i\in\ca N\coloneqq \{1,\dots,N\}$ adopts an action (or strategy) $x_i\in\bR^n$ belonging to its \textit{local feasible set} $\Omega_i\subset \bR^n$, i.e., the collection of  those strategies meeting its local constraints. 
We assume that this set is convex and closed. 
\smallskip
\begin{stassumption}[Convexity]
\label{ass:local_constrain}
For every $i\in\mathcal{N}$, the set $\Omega_i\subset \mathbb{R}^n$ is non-empty, compact and convex.
\hfill\QEDopen
\end{stassumption}
\smallskip
The agents are also subject to $M$ time-varying affine and separable \textit{coupling constraints}, that generate an entanglement between the strategy chosen by player $i$ and those of the others.
For an agent $i\in\ca N$, at time instant $k\in\bN$, the time-varying set of strategies satisfying the coupling constraints, given the other agents' strategies $\bld{x}_{-i}$, reads as
\smallskip
\begin{equation*}
\label{eq:X_i_coupling_constr_TV}
\ca X_i(\bld x_{-i},k) \coloneqq \left\{y\in \bR^n \, | \, C_i(k) y +\textstyle{\sum_{\substack{j=1\\j\not =i}}^N}C_j(k) x_j (k) \leq c(k)  \right\}\:\,
\end{equation*}
\smallskip  
where $C_j(k)\in\bR^{M\times n}$ and $c(k)\in\bR^M$.

In the following, we refer to the collective vector $\bld x\coloneqq\col((x_i)_{i\in\ca N})\in\bR^{Nn}$ as the \textit{strategy profile} of the game.  
All the  strategies profiles that satisfy both the local and coupling constraints determine the \textit{collective feasible decision set}, defined  as 
\smallskip
\begin{equation}
\label{eq:collective_feasible_set}
\bld{\ca X}(k) \coloneqq \bld{\Omega} \cap \left\{\bld x \in \bR^{Nn} | \bld{C}(k)\bld {x} \leq c(k)  \right\} 
\end{equation}
\smallskip
where $\bld C(k)\coloneqq [C_1(k),\dots,C_N(k)]\in \bR^{M\times Nn}$ and $\bld \Omega \coloneqq \prod_{i=1}^N \Omega_i$.  
\smallskip
\begin{stassumption}%[Convex constraint sets]
\label{ass:convex_constr_set}
For all $i\in\ca N$ and $k\in\bN$, the \textit{collective feasible decision set} $\bld{\ca X}(k) $ satisfies Slater's condition. 
\hfill \QEDopen
\end{stassumption}
\smallskip
All the players in the network are assumed myopic and rational, and thus each agent $i\in\ca N$ aims only at minimizing its local cost function $J_i(x_i,z)$. The myopic nature of the agents is reflected in the argument of the cost function that depend only on the current strategies of the players (as we will clarify in the following).
In this work, we assume that the cost function have the proximal structure, as defined next.

\smallskip
\begin{stassumption}[Proximal cost functions]
\label{ass:cost_function}
For all $i\in\mathcal{N}$, the function $J_i:\mathbb{R}^n\times \mathbb{R}^n\rightarrow\overline{\mathbb{R}}$ is defined as 
\begin{equation}
J_i(x_i,z) :=  \bar f_{i}(x_i) +\textstyle{\frac{1}{2}}\lVert x_i-z\rVert^2,
\label{eq:def_costFunction}
\end{equation}
where the function $\bar f_{i} \coloneqq f_{i}+\iota_{\Omega_i}:\mathbb{R}^n\rightarrow\overline{\mathbb{R}}$ is convex and lower semi-continuous. 
\hfill\QEDopen
\end{stassumption}
\smallskip
 The cost function is composed of two parts, $\bar f_i$ is the local part and has a double role: describing the local objective of agent $i$, via $f_i$, and ensuring that the next strategy belongs to $\Omega_i$, through the indicator function $\iota_{\Omega_i}$. The quadratic part of $J_i$ works as a regularization term and penalizes the distance of the local strategy from $z$. It is also responsible for the strict-convexity of $J_i$, even though $\bar f_i$ is only lower semi-continuous, see \cite[Th.~27.23]{bauschke:combettes}. 

Before providing a formal description of the second argument $z$ in the cost function, let us introduce the time-varying communication network adopted by the agents. We assume that, at each time instant $k\in\bN$, it is described by a strongly connected digraph, defined via the couple $(\ca V,A(k))$.  The set $\ca V$ represents the nodes of the graph that are the players in the game, i.e., $\ca V = \ca N$, so this set does not vary over time. The matrix $A(k)$ denotes the adjacency matrix of the digraph, at time $k$, where $a_{i,j}(k)\coloneqq [A(k)]_{ij}$. For every $i,j\in\ca N$,  $a_{i,j}(k)\in[0,1]$ is the weight that agent $i$ assigns to the strategy of agent $j$. If $a_{i,j}(k)=0$, then  agent $i$ does not communicate with agent $j$. The set of all the  neighbors of agent $i$ is defined as $\ca N_i(k)\coloneqq \{ j\,|\, a_{i,j}(k)>0 \}$.  The following assumption formalizes the properties of the adjacency matrix required throughout this work.
\smallskip
\begin{stassumption}[Row stochasticity and self-loops]
\label{ass:row_stoch}
At every time instant $k\in\bN$, the communication graph is strongly connected. The matrix $A(k)=[a_{i,j}(k)]$ is row stochastic, i.e., $a_{i,j}(k) \geq 0$ for all $i,j \in \ca N$, 
and $\sum_{j=1}^N a_{i,j}(k)=1$, for all $i \in \ca N$. Moreover, $A(k)$ has strictly-positive diagonal elements, i.e., $\min_{i\in\ca N}a_{i,i}(k)=:\underline{a}_k>0$ .~\hfill\QEDopen
\end{stassumption}
\smallskip

For each agent  $i\in\ca N$, the term $z$ in~\eqref{eq:def_costFunction} represents an aggregative quantity  defined by 
$$z\coloneqq \textstyle{\sum_{j=1}^N} a_{i,j}(k) x_j(k)\,,$$
and  hence it is the average of the neighbors' strategies, weighted via the adjacency matrix $A(k)$.
So, the actual cost function of agent $i$ at time $k$ is $J_i(x_i,\sum_{j=1}^N a_{i,j}(k) x_j(k))$.

As mentioned before, the agents are considered rational, thus their only objective is to minimize their local cost function, while satisfying the local and coupling constraints. The dynamics describing this behavior are the \textit{myopic best response dynamics}, defined, for each player $i\in\ca N$, as:
\begin{equation}
x_i(k+1) =  \argmin_{y\in \ca X_i(\bld x_{-i},k)} \; J_i \left(y,\textstyle\sum_{j=1}^N a_{i,j}(k) x_j(k) \right) \:.
\label{eq:myopic_BR}
\end{equation} 
The interaction of the $N$ players, using dynamics \eqref{eq:myopic_BR}, can be natuarally formalized as a \textit{ noncooperative network game}, defined, for all $k\in\bN$, as
\begin{align}
\label{eq:game}
\forall i \in \ca N:
\begin{cases}
\textstyle
\argmin_{y \in \mathbb R^n}& 
 f_i(y)  + \frac{1}{2}\left\| y - \sum_{j=1}^N a_{i,j}x_j \right\|^2\\
\:\:\text{ s.t. } &  y \in \Omega_i\cap \ca X_i(\bld x_{-i},k)\,,
\end{cases}
\end{align}
where we omitted the time dependency of $a_{i,j}(k)$ and $x_j(k)$ to ease the notation.

\subsection{Equilibrium concept and convergence}
\label{subsec:equilibrium_concept}

For the game in~\eqref{eq:game}, the concept of  equilibrium point is non trivial. A popular equilibrium notion for constrained game is the, so called, \textit{generalized network equilibrium} (GNWE). Loosely speaking, a profile strategy $\bld{\hat x}$ is a GNWE of the game, if no player $i$ can change its strategy  to another \textit{feasible} one while decreasing $J_i \left(\hat x_i,\textstyle\sum_{j=1}^N a_{i,j} \hat x_j \right)$. Notice that, if $A$ does not have self-loops, GNWE boils down to \textit{generalized Nash equilibrium}, see \cite{facchinei:kanzow:07}.  

This idea of equilibrium  cannot be directly applied to  \eqref{eq:game} and in fact every variation in the communication network generates a different game, with its own set of GNWE. Therefore, the  equilibria in which we are interested are those invariant to the changes  in the communication; they take the name of \textit{persistent} GNWE (p--GNWE).  
\smallskip
\begin{definition}[{persistent GNWE}]   
A collective vector $\bar{\bld x} = \col((\bar x_i)_{i\in \ca N})$ is a persistent GNWE (p--GNWE) for the game \eqref{eq:game},  if there exists some $\overline k>0$, such that for all $i\in\ca N$,  
\begin{equation}
\bar x_i = \bigcap_{k\geq\bar k} \quad \argmin_{y\in\ca X(\bar{\bld{x}}_{-i},k)} J_i \left( y,\textstyle\sum_{j=1}^N a_{i,j}(k) \bar x_j \right) \:.
\label{eq:def_p-GNWE}
\end{equation}
\hfill\QEDopen
\end{definition}  
\smallskip

We have defined both the game and the set of equilibria we are interested in. Let us now elaborate on the convergence properties of the game in \eqref{eq:game}, providing  three examples highlighting different aspects of these dynamics. By means of the first two examples, we show, first  that the dynamics in \eqref{eq:myopic_BR} can fail to converge to an equilibrium point, even in the case of a static communication network, where the existence of a GNWE is guaranteed by \cite[Prop.~4]{grammatico:18tcns}  and then that the existence of p--GNWE is not guaranteed. Finally, the last example shows a case where the game in \eqref{eq:game} converges.

\smallskip
\begin{example}[non--convergence]
\label{ex:non_converging}
Consider a 2-player constrained game, where, for $i\in\{1,2\}$,  $x_i \in \bR$ and the local feasible decision set is defined as $ \ca X_i(u) := \{ v \in \bR \, | \, u+v = 0 \} = \{ - u \} $ and \textit{does not} vary over time. The collective feasible decision set is convex and reads as $
\bld{\ca X}:= \{ \bld x \in \bR^2 \, |\, x_1 + x_2 = 0 \}
$, hence the game is jointly convex. 
The dynamics of the game are as in \eqref{eq:myopic_BR}, and can be rewritten in closed form as the discrete-time linear system:
\begin{align}
\begin{bmatrix}
x_1(k+1) \\ x_2(k+1)
\end{bmatrix}
=
\begin{bmatrix}
0 & -1\\
-1 & 0
\end{bmatrix}
\begin{bmatrix}
x_1(k) \\
x_2(k)
\end{bmatrix},
\end{align}
which is not globally convergent, e.g., consider $x_1(0)=x_2(0)=1$. \hfill\QEDopen
\end{example}
\smallskip
\begin{example}[equilibirum existence]\label{ex:existence_GNWE}
Consider a 2-player game without local or coupling constraints and scalar strategies. The communication network can vary between the two graphs described respectively by the adjacency matrices $A_1 =\left[\begin{smallmatrix}\nicefrac{1}{2} & \nicefrac{1}{2}\\
\nicefrac{1}{2} & \nicefrac{1}{2} \end{smallmatrix}\right]$ and  $A_2 =\left[\begin{smallmatrix}\nicefrac{1}{3} & \nicefrac{2}{3}\\
\nicefrac{1}{3} & \nicefrac{2}{3} \end{smallmatrix}\right]$. The cost functions of the agents are in the form of \eqref{eq:def_costFunction}, where the local part is chosen as $\bar f_i(x_i) = \frac{1}{2}\lVert x_i - i\rVert^2$, for $i\in\{1,2\}$. For each one of the communication networks, there exists only one equilibrium point of the game, i.e., $\bld x_{A_1} = [\nicefrac{5}{4}\,,\: \nicefrac{7}{4}]^\top$ and $\bld x_{A_2} = [\nicefrac{4}{3}\,, \:\nicefrac{11}{6}]^\top$, when respectively $A_1$ or $A_2$ is adopted. Therefore the set of p--GNWE of the game is empty, leading the dynamics to oscillate between $\bld x_{A_1}$ and $\bld x_{A_2}$. \hfill\QEDopen
\end{example}
\smallskip
\begin{example}[convergence]
\label{ex:convergence}
Once again, consider the a 2-player game, where for a player $i\in\{1,2\} $ the local feasible set is $\Omega_i=[-1,1]$ and $f_i(x_i) = 0$. The collective feasible decision set is defined as 
$$\bld{\ca X}(k) \coloneqq \{ \bld x \in [-1,1]^2\,|\, m(k)\leq x_1+x_2\}$$
where $m(k)\in[-1,-0.25]$.
We choose $A(k)$ satisfying Standing Assumption~\ref{ass:row_stoch} and it is doubly stochastic, for every time instant $k\in\bN$. If the strategy profile belongs to the consensus subspace $\ca C$, both agents achieve the minimum of their cost function, and therefore all those points are equilibria of the unconstrained game. Furthermore, for the set $\hat{\ca C} = \{u\in\bR^2\,|\, u=\alpha \1^\top,\: \alpha\in[-0.25,1] \}$, it always holds that $\hat{\ca C}\subseteq \ca C\cap \bld{\ca X}$, and hence they are p-GNWE of the game. Assume that at $\bar k>0$, $m(\bar k)=-0.25$, then, for all $k>\bar k$, the dynamics reduce to $\bld x(k+1)=A(k)\bld x(k)$, therefore the profile strategy will converge to a  point in $\hat{\ca C}$, i.e., to a p--GNWE of the game. \hfill\QEDopen
\end{example}

\subsection{Primal--dual characterization}
As illustrated in Example~\ref{ex:non_converging}, the myopic constrained dynamics in \eqref{eq:myopic_BR} can fail to  converge, and thus we recast them as pseudo collaborative ones. The idea is that each player will minimize its own cost function, while at the same time coordinate with the others to satisfy the constraints. With this approach, we aim to achieve asymptotic fulfillment of the coupling constraints. As a first step, we dualize the dynamics introducing, for each player $i\in\ca N$, a dual variable $\lambda_i\in\bR^M_{\geq 0}$. The arising problem is an \textit{auxiliary (extended) network game}, see  \cite[Ch.~3]{cominetti:facchinei:lasserre}. The collective vector of the dual variables is denoted by $\bld \lambda \coloneqq \col((\lambda_i)_{i\in\ca N})$. The equilibrium concept is adapted to this modification in the dynamics, so we define the \textit{persistent Extended Network Equilibrium} (p--ENWE).

\smallskip 
\begin{definition}[persistent Extended Network Equilibrium] \label{def:Extended_NE}
The pair $(\boldsymbol{\overline  x},\bld{\overline{\lambda}} )$,  is a p--ENWE for the game in~\eqref{eq:game} if there exists $\bar k>0$ such that, for every $ i\in\ca N$,
\begin{align}\nonumber
\overline x_i & = \bigcap_{k\geq\bar k}\argmin_{y\in\bR^n} \, J_i \left( y,\textstyle\sum_{j=1}^Na_{i,j}(k)\overline{x}_j \right) + \overline \lambda^\top_i C_i(k)y,\\
\label{eq:cond_ENE}
\overline \lambda_i & = \bigcap_{k\geq\bar k} \argmin_{\xi\in\bR^M_{\geq 0}} \, -\xi^\top (C(k)\boldsymbol{\overline x}-c(k)) \:.
\end{align}
\hfill\QEDopen
\end{definition}
\smallskip

In the following, we assume the presence of a \textit{central coordinator} facilitating the synchronization between agents. This approach aligns with the new pseudo-collaborative behaviors of the agents, which  is widely used in the literature. The central coordinator broadcasts an auxiliary variable $\sigma\in\bR^M$ to each agent $i$, that, in turn, uses this information to compute its local dual variable $\lambda_i$.  
Specifically, at every time instant $k$, the agent scales the received variable $\sigma(k)$, by a possibly time-varying factor $\alpha_i(k)\in[0,1]$, attaining  in this way its local dual variable, i.e., $\lambda_i(k) \coloneqq \alpha_i(k) \sigma(k)$. 
The scaling factors $\alpha_i$ describe how the burden of satisfying the constraints are divided between the agents, hence $\sum_{i=1}^N \alpha_i = 1$. If  $\alpha_i= \nicefrac{1}{N} $, for all $i\in\ca N$, then the effort to satisfy the couplying constraints is fairly splitted between the agents, this case is considered in several works, e.g.,~\cite{cenedese:belg:gra:cao:2019:Asynch_ECC,yi_pavel:2017:disribiuted_primal_dual_conf,belgioioso_grammatico:2017:semi_decentralized_NE_seeking}. This class of problems was introduced  for the first time in the seminal work by Rosen \cite{rosen:65}, where the author formulates the concept of \textit{normalized equilibrium}. We adapt this idea for the problem at hand,  defining the \textit{persistent normalized extended network equilibrium} (pn-ENWE).
% \red{and further studied in \cite{paccagnan:nash_wardrop_equilibria_aggr_games}}
\smallskip 
\begin{definition}[persistent normalized-ENWE] \label{def:pn-ENWE}
The pair $(\boldsymbol{\overline  x},\overline{\sigma } )$, is a pn--ENWE for the game in~\eqref{eq:game}, if it exists $\bar k>0$, such that for all $ i\in\ca N$ it satisfies
\begin{align}\nonumber
\overline x_i & = \bigcap_{k\geq\bar k} \argmin_{y\in\bR^n} \, J_i \left(y,\textstyle\sum_{j=1}^Na_{i,j}(k)\overline{x}_j \right) + \alpha_i(k)\,\overline \sigma^\top C_i(k)y,\\
\label{eq:cond_pnENE}
\overline \sigma & = \bigcap_{k\geq\bar k} \argmin_{\varsigma\in\bR^M_{\geq 0}} \, -\varsigma^\top (\bld C(k) \boldsymbol{\overline x}-c(k)),
\end{align}
with $\alpha_i(k)>0$.
\hfill\QEDopen
\end{definition}
\smallskip
The following lemma shows that a pn--ENWE is also a p--GNWE, and vice versa.
\smallskip
\begin{lemma}[p--GNWE as fixed point]\label{lemma:GNWE_as_fixed_point} The following statements are equivalent:
\begin{enumerate}[(i)]
\item $\overline{\boldsymbol x}$ is a  p--GNWE for the game in \eqref{eq:game};
\item $\exists \overline \sigma\in\bR^{M}$ and $\bar k>0$ such that $\col(\overline{\boldsymbol x},\overline \sigma ) \in \ca E$, where $\ca E$ is the set of all the pn--GNWE of the game \eqref{eq:game}.
\hfill\QEDopen
\end{enumerate}
\end{lemma}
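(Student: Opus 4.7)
The plan is to characterize both notions by first-order (KKT) conditions, so that the equivalence reduces to a reparametrization of the Lagrange multipliers. Standing Assumption~\ref{ass:cost_function} makes each agent's local problem strictly convex, and Standing Assumption~\ref{ass:convex_constr_set} (Slater) together with Standing Assumption~\ref{ass:local_constrain} (compactness of $\Omega_i$) makes the KKT conditions both necessary and sufficient for the local minimizations appearing in \eqref{eq:def_p-GNWE} and \eqref{eq:cond_pnENE}.

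First I would translate Definition~1. For each $i\in\ca N$ and each $k\geq\bar k$, $\bar x_i$ is a minimizer of $J_i(\cdot,\sum_j a_{i,j}(k)\bar x_j)$ over $\ca X_i(\bar{\bld x}_{-i},k)$ iff there exists a multiplier $\lambda_i(k)\in\bR^M_{\geq 0}$ such that
\begin{equation}
\label{eq:kkt_gnwe}
\0\in\partial\bar f_i(\bar x_i)+\bigl(\bar x_i-\textstyle\sum_{j}a_{i,j}(k)\bar x_j\bigr)+C_i(k)^\top\lambda_i(k),
\end{equation}
together with primal feasibility $\bld C(k)\bar{\bld x}\leq c(k)$ and complementarity $\lambda_i(k)^\top(c(k)-\bld C(k)\bar{\bld x})=\0$. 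Next I would expand Definition~3: the inner LP over $\bR^M_{\geq 0}$ in \eqref{eq:cond_pnENE} has a nonempty argmin iff $\bld C(k)\bar{\bld x}\leq c(k)$, in which case $\bar\sigma$ lies in the argmin iff $\bar\sigma\geq\0$ and $\bar\sigma^\top(c(k)-\bld C(k)\bar{\bld x})=0$, while the outer condition on $\bar x_i$ produces exactly \eqref{eq:kkt_gnwe} with the substitution $\lambda_i(k)\mapsto \alpha_i(k)\bar\sigma$.

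The implication $(ii)\Rightarrow(i)$ is then immediate: for $(\bar{\bld x},\bar\sigma)\in\ca E$, setting $\lambda_i(k):=\alpha_i(k)\bar\sigma$ gives a valid non-negative multiplier (since $\alpha_i(k)>0$ and $\bar\sigma\geq\0$) satisfying \eqref{eq:kkt_gnwe} and complementarity, so $\bar{\bld x}$ is a p-GNWE by sufficiency of KKT.

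The main obstacle is the converse $(i)\Rightarrow(ii)$, where the multipliers $\lambda_i(k)$ obtained from the KKT characterization of a p-GNWE may a~priori differ across agents $i$ and times $k$, whereas a pn-ENWE requires all of them to be positive scalar multiples of one \emph{common, time-invariant} $\bar\sigma$. To handle this I would argue as follows. Fix any $k\geq\bar k$; because $\bar{\bld x}$ solves every agent's local problem at time $k$ on the shared constraint set $\{\bld y:\bld C(k)\bld y\leq c(k)\}$, it is in particular a solution of the associated variational inequality, whose KKT system (nonempty by Slater, bounded by compactness of $\bld\Omega$) produces a single multiplier $\mu(k)\in\bR^M_{\geq 0}$ with $\mu(k)^\top(c(k)-\bld C(k)\bar{\bld x})=0$ that simultaneously certifies \eqref{eq:kkt_gnwe} for every $i$. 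Defining $\bar\sigma:=\mu(\bar k)$ and $\alpha_i(k):=1/N$ then yields the desired aligned multipliers at $k=\bar k$; for $k>\bar k$ one chooses $\alpha_i(k)$ so that $\alpha_i(k)\bar\sigma$ still solves the (necessarily consistent) KKT system for $\bar{\bld x}$ at time $k$, possibly after increasing $\bar k$ to the index from which the active-set structure of $c(k)-\bld C(k)\bar{\bld x}$ stabilizes (which it must, as $\bar{\bld x}$ is persistent and the dual feasibility cone is polyhedral). This produces $(\bar{\bld x},\bar\sigma)\in\ca E$ and closes the equivalence.
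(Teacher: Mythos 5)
Your (ii) $\Rightarrow$ (i) direction is sound and is essentially the intended route: reading both definitions through their KKT systems, a pn--ENWE supplies the nonnegative multipliers $\lambda_i(k)=\alpha_i(k)\overline\sigma$, and KKT sufficiency (convexity of $\bar f_i$ and of the affine constraints, plus Standing Assumptions~\ref{ass:local_constrain}--\ref{ass:cost_function}) converts stationarity, primal feasibility and complementarity into optimality of $\overline x_i$ on $\Omega_i\cap\ca X_i(\overline{\bld x}_{-i},k)$ for every $k\geq\bar k$. Note that this is the only direction the paper actually uses downstream (the algorithm converges to a pn--ENWE, which is then a p--GNWE), and the paper itself omits the proof, deferring to the analogous \cite[Lem.~2]{grammatico:18tcns}.

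The converse direction as you wrote it has a genuine gap. The pivotal sentence --- ``because $\overline{\bld x}$ solves every agent's local problem \dots it is in particular a solution of the associated variational inequality, whose KKT system \dots produces a single multiplier $\mu(k)$'' --- asserts exactly the implication that fails for jointly convex generalized games. Agent-wise optimality on the sections $\ca X_i(\overline{\bld x}_{-i},k)$ yields one multiplier $\lambda_i(k)$ \emph{per agent}, with no reason for these to coincide or to stand in the prescribed ratios $\alpha_i(k)$; the inclusion that does hold is the opposite one (every normalized/variational equilibrium is a generalized equilibrium, cf.\ \cite{facchinei:kanzow:07,rosen:65}), and in general the normalized equilibria form a strict subset of the generalized ones, so the existence of a common $\mu(k)$ must be proved rather than invoked. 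The remaining construction is also unsupported: even granting a common $\mu(k)$ at each fixed $k$, you need one time-invariant $\overline\sigma$ certifying stationarity against $C_i(k)$ for all $k\geq\bar k$ simultaneously, and ``increasing $\bar k$ until the active set stabilizes'' is not justified for arbitrary time-varying data $C(k),c(k)$; moreover the $\alpha_i(k)$ are not free parameters you may tune a posteriori (Theorem~\ref{th:TV_Prox_GNWE} fixes $\alpha_i(k)=q_i(k)$). A correct proof of (i) $\Rightarrow$ (ii) must exploit structure beyond what you invoke, or the statement should be used only in the one-directional form that the convergence analysis requires.
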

\smallskip
We omit the demonstration of the lemma, since it is analogous to that in \cite[Lem.~2]{grammatico:18tcns}.

This reformulation of the problem addresses the criticism highlighted in Example~\ref{ex:existence_GNWE}. In the following, we develop a distributed iterative algorithm converging to a p-GNWE of the original game in~\eqref{eq:game}.

\begin{figure*}[ht]
\hrule
\begin{subequations}\label{eq:tv_prox-GNWE}
\begin{align}\label{eq:tv_prox-GNWE 1}
\forall i \in \ca N \,:\quad \tilde x_i  &= \prox_{ \textstyle{\frac{\delta_i(k)}{\delta_i(k)+1}} \bar f_{i}} \left(  \textstyle{\frac{\delta_i(k)}{\delta_i(k)+1}} \left( \textstyle{\frac{1}{\delta_i(k)}} x_i  + \textstyle{\sum_{j=1}^N} a_{i,j}(k) x_j - \alpha_i(k) C_i^\top(k) \sigma \right)   \right)\\ \label{eq:tv_prox-GNWE 2}
\tilde \sigma  &= \proj_{\bR_{\geq 0}^M} \left(\sigma +  \textstyle{\frac{1}{\beta(k)}} \left(C(k) \bld x  -c(k)\right)\right)\\ \label{eq:tv_prox-GNWE 3}
\forall i \in \ca N \,:\quad x_i^+ &= x_i + \gamma(k) q_i(k) \left[ \delta_i(k)(\tilde x_i-x_i) +\textstyle{\sum_{j=1}^N}a_{i,j}(k)(\tilde {x}_j - x_j)  - \alpha_i(k) C_i^\top(k)\left(\tilde \sigma - \sigma\right) \right]\\ \label{eq:tv_prox-GNWE 4}
\sigma^+ &= \sigma + \gamma(k)\big[\beta(k) (\tilde \sigma -\sigma ) + C(k)(\bld {\tilde x}-\bld x )\big]
\end{align}
\end{subequations}
\hrule
\end{figure*}

\subsection{On the existence of persistent equilibria}
\label{subsec:eq_existence}

We devote the remainder of the section to a more in depth analysis of the problem of the existence of a p--GNWE for the game in \eqref{eq:game}. In general, there is no guarantee that such an equilibrium exists, as shown in Example~\ref{ex:existence_GNWE}.
The literature dealing similar problems is split on how to handle this problem. Namely, two possible assumptions can be adopted to proceed with the analysis. The first one supposes \textit{a priori} the existence of at least one p--GNWE in the game. This assumption does not restrict the problem at hand, since the convergence can be established only for the cases in which it is satisfied. However, it can be difficult to check if this assumption holds in practice. This approach is the one chosen in this work and it is usually adopted when the focus is more on theoretical results, see \cite[Cor.~5.19]{bauschke:combettes}, \cite[Prop.~3.1]{combettes:yamada:15}, \cite[Ass.~3]{grammatico:18tcns} and \cite[Ass.~6]{cenedese:2019:arXiv}. 

\smallskip
\begin{stassumption}[Existence of a pn-ENWE]
\label{ass:exist_PGNWE}
The set of pn-ENWE of \eqref{eq:game} is non-empty, hence $\ca E \not= \varnothing$ .
\hfill\QEDopen
\end{stassumption}
\medskip

On the other hand, the second assumption considers only those games in which the $N$ local cost functions share at least one common fixed point. This implies that at least one point in the consensus subspace is an equilibrium invariant to the change of the communication network. If, at the same time, this point is also feasible, then it is a p--GNWE of the game. This assumption is clearly stronger than the previous one. Nevertheless, it is easier to verify in practice, since it only requires the analysis of the cost functions of the agents, as shown in Example~\ref{ex:convergence}.
Mainly for this reason, it is widely spread throughout the literature, where  it is either implicitly verified as in \cite{nedic:ozdaglar:parrillo:10} or explicitly required \cite[Ass in Th.~2]{fullmer:morse:2018:common_fixed_point_finite_family_paracontraction} . 
 
\section{Convergence result}
\label{sec:conv_result}
%In the previous section, we relaxed the initial game in \eqref{eq:game} achieving a formulation in which the agents coordinate by means of a central coordinator to achieve asymptotically the fulfilment of the coupling constraints. 
Next, we propose the main result of this paper, an iterative and decentralized algorithm converging to  a pn-GNWE of the game in \eqref{eq:game}. We call it TV--Prox--GNWE and it is reported in \eqref{eq:tv_prox-GNWE 1}--\eqref{eq:tv_prox-GNWE 4}, while its complete derivation is described in the Appendix.

In order to provide the bounds for the choices of the parameters in the algorithm, let us redefine the matrix $\bld A(k)$ via a diagonal matrix, an upper and a lower triangular matrix, i.e., $\bld A(k) = \bld A_{\textup{ut}}(k) + \bld A_{\textup{d}}(k) +\bld A_{\textup{lt}}(k)$, where $\bld A_{\textup{ut}}$ and  $\bld A_{\textup{lt}}$ always have zeros diagonal elements. For each time instant $k\in\bN$, the parameters in TV--Prox--GNWE  are set such that, the following inequalities hold:
\begin{subequations}
\label{eq:bound_param} 
\begin{align}\label{eq:bound_param1} 
&\min_{i\in\ca N}( \delta_i^{-1} + a_{i,i})  \geq \lVert \bld A-\bld A_{\textup{d}} \rVert + \lVert \bld C^\top - \bld{\Lambda C}^\top \rVert\\\label{eq:bound_param2} 
&\max_{i\in\ca N}( 2 q_i (\delta_i^{-1} +  a_{i,i}))  < R  +\gamma^{-1} \\\nonumber
&R\coloneqq 2\lVert \bld {QA}_{\textup{ut}} +  \bld {AQ}_{\textup{lt}}  \rVert +  \lVert \bld Q( \bld C^\top-\bld{\Lambda C}^\top ) \rVert\\\label{eq:bound_param3} 
&\beta  \geq {\textstyle \frac{1}{2}} \lVert \bld C - \bld{C \Lambda} \rVert \\\label{eq:bound_param4} 
&\beta  < {\textstyle \frac{1}{2}}\big(\gamma^{-1} - \lVert \bld C - \bld{C \Lambda} \rVert\big)
\end{align}
\end{subequations}
where $\bld \Lambda(k)\coloneqq\diag((\alpha_i(k))_{i\in\ca N})\otimes I_n$ and $\bld Q(k)\coloneqq \diag(( q_i(k) )_{i\in\ca N})\otimes I_n$, with $q_i$ being the $i$-th element of the left Perron-Frobenius eigenvector of $ A(k)$. Also in this case, we omitted the time dependency of the matrices to ease the notation.
The bounds in \eqref{eq:bound_param3}~--~\eqref{eq:bound_param4} implicitly lead to a condition on the maximum value of the step size $\gamma$, namely 
$\gamma \leq {\scriptstyle\frac{1}{2}} \lVert \bld C - \bld{C \Lambda}  \rVert^{-1}$.

The TV--Prox--GNWE in \eqref{eq:tv_prox-GNWE}, is composed of three main steps: a proximal gradient descend, performed by every agent \eqref{eq:tv_prox-GNWE 1},  a  dual ascend done by the central coordinator \eqref{eq:tv_prox-GNWE 2} and correction step, in \eqref{eq:tv_prox-GNWE 3}~--~\eqref{eq:tv_prox-GNWE 4}, to balance the asymmetricity of the weights  in the directed network, i.e., $a_{i,j}\not = a_{j,i}$.

The main technical result of the paper is the following theorem, where we  establish global convergence of the sequence generated by the TV--Prox--GNWE to a p-GNWE of the game in~\eqref{eq:game}.
\smallskip
\begin{theorem}
\label{th:TV_Prox_GNWE}
For all $i\in\ca N$ and $k\in\bN$, set $\alpha_i(k)= q_i(k)$, with $q_i(k)$ the $i$-th element of the left Perron--Frobenius eigenvector of $A(k)$, and choose $\delta_i(k)$, $\beta(k)$ and $\gamma$ satisfying \eqref{eq:bound_param}. Then, for any initial condition, the sequence $(\bld x(k))_{k\in\bN}$ generated by \eqref{eq:tv_prox-GNWE} converges to a p-GNWE of the game in  \eqref{eq:game}.~\hfill\QEDopen
\end{theorem}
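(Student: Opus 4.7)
The plan is to recast the update \eqref{eq:tv_prox-GNWE 1}--\eqref{eq:tv_prox-GNWE 4} as a Krasnosel'skii--Mann type iteration $\bld{\omega}(k+1) = \bld{\omega}(k) + \gamma(k)\bld{M}(k)\left(T_k(\bld{\omega}(k)) - \bld{\omega}(k)\right)$ on the augmented state $\bld{\omega} = \col(\bld{x},\sigma)$, where the primal block of $T_k$ is the proximal map in \eqref{eq:tv_prox-GNWE 1}, the dual block is the projection in \eqref{eq:tv_prox-GNWE 2}, and the correction in \eqref{eq:tv_prox-GNWE 3}--\eqref{eq:tv_prox-GNWE 4} carries the block scaling $\bld{M}(k) = \diag(\bld{Q}(k),\beta(k) I)$. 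The crucial point is that the theorem's choice $\alpha_i(k) = q_i(k)$ is exactly what makes $\bld{\Lambda}(k) = \bld{Q}(k)$, so that the Perron weight simultaneously symmetrizes the graph action $\bld{Q}\bld{A} + \bld{A}^\top \bld{Q}$ and diagonalizes the coupling term in the dual.

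Next I would identify fixed points. Writing the KKT/variational inclusion for the minimizations in Definition \ref{def:pn-ENWE} and recognizing the resolvent of $\partial \bar f_i$ as the proximal operator, one checks that $\bld{\omega}^* = \col(\overline{\bld{x}},\overline{\sigma})$ is a pn-ENWE for the snapshot $A(k)$ iff $\bld{\omega}^* \in \fix(T_k)$. Under Standing Assumption \ref{ass:exist_PGNWE}, a persistent pair $\bld{\omega}^*$ lies in $\bigcap_{k\geq\bar k}\fix(T_k)$, and by Lemma \ref{lemma:GNWE_as_fixed_point} the primal projection of any element of this intersection is a p-GNWE, reducing the claim to convergence of $(\bld{\omega}(k))_{k\in\bN}$ to some element of this intersection.

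I would then establish that $T_k$ belongs to the class $\mathfrak{I}$ in the weighted space $\ca H_{\bld{\Phi}(k)}$, with $\bld{\Phi}(k) = \diag(\bld{Q}(k),\beta(k)^{-1} I)$. The lower bounds \eqref{eq:bound_param1} and \eqref{eq:bound_param3} are exactly the conditions making a certain $2\times 2$ block matrix (with diagonal blocks $\delta_i^{-1}+a_{i,i}$ and $\beta$, off-diagonal blocks encoding the asymmetric network/coupling parts) positive semidefinite after $\bld{Q}$-symmetrization, which yields monotonicity of the underlying maximal operator; the upper bounds \eqref{eq:bound_param2} and \eqref{eq:bound_param4} on $\gamma$ ensure that the resulting resolvent composition is averaged with constant bounded away from one. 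Using the properties of class $\mathfrak{I}$ operators recalled after Standing Assumption \ref{ass:row_stoch}, this gives, for every $\bld{\omega}^* \in \fix(T_k)$,
\begin{equation*}
\|\bld{\omega}(k+1) - \bld{\omega}^*\|_{\bld{\Phi}(k)}^2 \leq \|\bld{\omega}(k) - \bld{\omega}^*\|_{\bld{\Phi}(k)}^2 - \rho\,\|T_k(\bld{\omega}(k)) - \bld{\omega}(k)\|_{\bld{\Phi}(k)}^2
\end{equation*}
for a uniform $\rho > 0$ determined by the slacks in \eqref{eq:bound_param}.

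The main obstacle is the time variation of the metric $\bld{\Phi}(k)$: the quasi-Fej\'er inequality above is written in $\bld{\Phi}(k)$, whereas the next iterate is naturally evaluated in $\bld{\Phi}(k+1)$. I would overcome this by exploiting Standing Assumption \ref{ass:row_stoch} to obtain uniform positive lower and upper bounds on the Perron entries $q_i(k)$ and on $\beta(k)$, so that all the norms $\|\cdot\|_{\bld{\Phi}(k)}$ are equivalent to a fixed reference norm with ratios bounded independently of $k$. This converts the inequality into a genuine Fej\'er-monotone bound (up to a summable multiplicative perturbation), implying boundedness of $(\bld{\omega}(k))$ and asymptotic regularity $\|T_k(\bld{\omega}(k)) - \bld{\omega}(k)\| \to 0$. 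A standard Opial-type argument, combined with the fact that every cluster point of $(\bld{\omega}(k))$ must lie in $\bigcap_{k\geq\bar k}\fix(T_k)$ by asymptotic regularity and closedness of the proximal/projection maps, then forces convergence to a single limit in that intersection. Applying Lemma \ref{lemma:GNWE_as_fixed_point} to the primal component concludes that $(\bld{x}(k))_{k\in\bN}$ converges to a p-GNWE of \eqref{eq:game}.
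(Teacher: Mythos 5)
Your high-level strategy (fixed points of a prox/projection map $\equiv$ pn--ENWE, a Fej\'er-type inequality from a class-$\mathfrak{I}$ property, then a cluster-point argument) is the right family of ideas, and you correctly spot that $\alpha_i=q_i$ forces $\bld\Lambda=\bld Q$ and thereby symmetrizes the graph action. However, there are two concrete gaps. First, the iteration \eqref{eq:tv_prox-GNWE} is \emph{not} a Krasnosel'skii--Mann step with the block-diagonal scaling $\diag(\bld Q,\beta I)$: inspecting \eqref{eq:tv_prox-GNWE 3}--\eqref{eq:tv_prox-GNWE 4}, the increment applied to $\tilde{\bld\varpi}-\bld\varpi$ is $\overline{\bld Q}\Phi$, where $\Phi$ is the full \emph{nonsymmetric} preconditioner with off-diagonal blocks $-\bld\Lambda \bld C^\top$ and $+\bld C$ (this is why $\delta_i(\tilde x_i-x_i)$, the cross term $\sum_j a_{ij}(\tilde x_j-x_j)$ and $-\alpha_iC_i^\top(\tilde\sigma-\sigma)$ all appear there). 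Consequently $\tilde{\bld\varpi}=\mathrm{J}_{\Phi^{-1}\bld{\ca A}}\bld\varpi$ is a resolvent in a metric whose symmetric part $U=(\Phi+\Phi^\top)/2$ is \emph{not} block diagonal, and a skew-symmetric correction $S=(\Phi-\Phi^\top)/2$ must be carried through the analysis (as in the forward--backward--half-forward scheme of Brice\~no-Arias and Davis). Your proposed metric $\diag(\bld Q,\beta^{-1}I)$ cannot account for the bounds \eqref{eq:bound_param1} and \eqref{eq:bound_param3}, which are precisely Gerschgorin conditions ensuring $U\succ 0$ for this coupled matrix; with a diagonal metric those bounds have no role and the averagedness claim does not follow.

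Second, and more seriously, your treatment of the time-varying metric fails. Uniform two-sided bounds on the Perron entries give norm-equivalence constants bounded away from $1$, so passing from $\|\cdot\|_{\Phi(k)}$ to $\|\cdot\|_{\Phi(k+1)}$ costs a fixed multiplicative factor $c>1$ at every step; the product of these factors diverges geometrically, so you do \emph{not} obtain a quasi-Fej\'er inequality ``up to a summable perturbation.'' Variable-metric Fej\'er arguments require $\Phi(k+1)\preceq(1+\eta_k)\Phi(k)$ with $(\eta_k)$ summable, which is an additional hypothesis unavailable here since $A(k)$ may change arbitrarily subject only to Standing Assumption~\ref{ass:row_stoch}. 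The paper's proof circumvents exactly this obstruction: after showing that $\ca T_\Phi=\mathrm{J}_{\Phi^{-1}\bld{\ca A}}+U^{-1}S(\mathrm{J}_{\Phi^{-1}\bld{\ca A}}-\Id)$ is quasi-nonexpansive in $\ca H_{K}$ with $K=\overline{\bld Q}U$, it renormalizes to $\ca W_\Phi=\Id-\tfrac12\|K\|^{-1}K(\Id-\ca T_\Phi)$, which belongs to class $\mathfrak{I}$ in the \emph{fixed, unweighted} space $\ca H$ for every $k$. The convergence theorem for iterations of class-$\mathfrak{I}$ operators with a common fixed point is then applied in a single metric, and the limit is identified through the maximal monotonicity of $\bld{\ca A}_k$ (showing $u_k:=\overline{\bld Q}\Phi(\bld\varpi-x_{\bld\varpi})\in\bld{\ca A}x_{\bld\varpi}$ with $u_k\to 0$). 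Without this renormalization step, or an explicit summability assumption on the metric variation, your argument does not close.
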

\smallskip
\begin{proof}
See Appendix.
\end{proof}
\smallskip

\section{Simulation}
\label{sec:simulation}

In this section, we adopt TV--Prox--GNWE to solve a problem of constrained consensus. We consider a game with $N=15$ agents, where the strategy of every agent $i$ is $x_i\in\bR^5$, and its local feasible decision set is $\Omega_i\in[m_i,\,M_i]$, with $m_i$ and $M_i$ randomly drawn respectively from $[-100,-5]$ and $[5,100]$. The local cost function is equal to $f_i(x_i)=\iota_{\Omega_i}(x_i)$. The adjacency matrices, descibing the communication network at every time instant $k$, are randomly generated and define digraphs of the type small-word, satisfying Standing Assumption~\ref{ass:row_stoch}. The coupling constraints are used to force the strategies towards the consensus subspace and are in the form $| x_i(k) - x_j(k)| \leq s(k)\1$, for every $i,j\in\ca N$, where $s(k)>0$ and it is decreasing over time. Notice that in this case the \textit{multiplier graph} is complete, see \cite{yi_pavel:2017:disribiuted_primal_dual_conf}. Finally, the parameters of the algorithm are chosen such that they always satisfy~\eqref{eq:bound_param}. 

\begin{figure}[ht]
	\centering
	\includegraphics[scale=0.17]{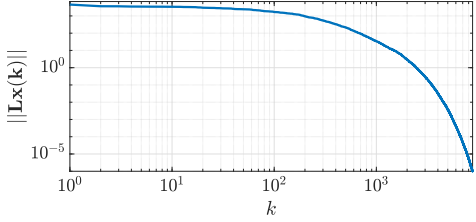}
	\caption{ Convergence of the strategy profile $\bld x(k)$ to the consensus subspace. The matrix $\bld L$ is the Laplacian matrix associated to the multiplier graph.}
	\label{fig:consensus}
\end{figure}
\begin{figure}[ht]
	\centering
	\includegraphics[scale=0.17 ]{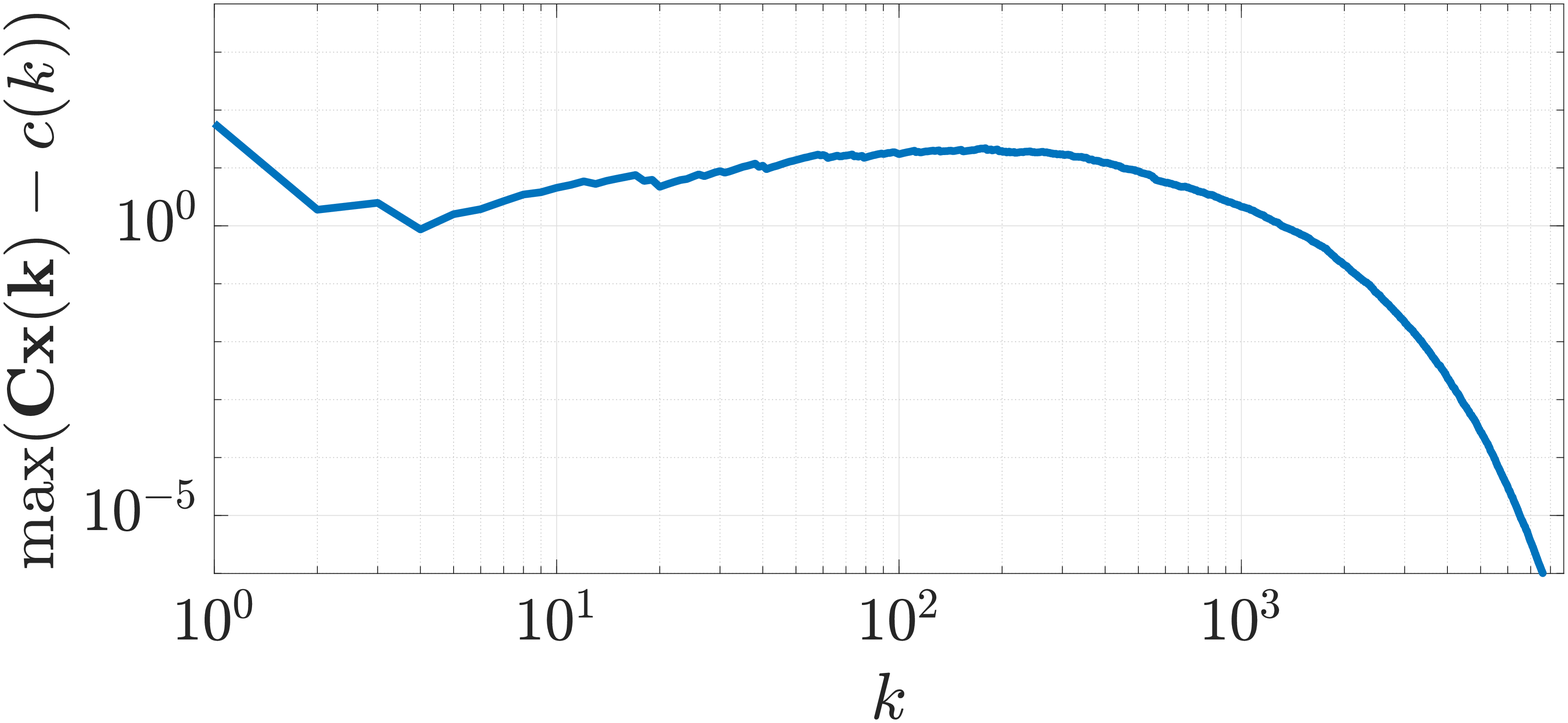}
	\caption{Asymptotic satisfaction of the time-varying affine coupling constraints $\bld{C x} (k)\leq c(k)$.}
	\label{fig:cnstr_viol}
\end{figure}
The trajectory of the profile strategy generated by TV--Prox--GNWE converges to the consensus subspace, this is shown in Fig.~\ref{fig:consensus}, by means  of the Laplacian matrix $\bld L$ of the multiplier graph. The initial strategy profile $\bld x(0)$ is randomly chosen in $\bld \Omega$. As expected from the result in Theorem~\ref{th:TV_Prox_GNWE}, the constraints are satisfied asymptotically, see Fig.~\ref{fig:cnstr_viol}.  

\section{Conclusion and outlook}
 \label{sec:conclusion}
In multi-agent network games, subject to time-varying coupling constraints and time-varying communication network, described by strongly connected digraphs, agents can fail to converge when they adopt proximal dynamics. 
Nevertheless, it is developed an iterative equilibrium seeking algorithms (TV--Prox--GNWE) that ensures the global convergence of the agents' strategies to an normalized equilibrium of the game, when it exists.

%The existence of an equilibrium for these games is not guaranteed, thus we discuss the most common assumptions adopted in literature to make this problem tractable.
%We design an iterative equilibrium seeking algorithm (TV--Prox--GNWE), achieving global asymptotic convergence to a \textit{persistent normalized-Extended Network Equilibrium} of the original game.  
%
%In this work, we proposed a possible solution to the problem when at every time instant the communication network is described by a directed and strongly connected graph, this assumption can be relaxed considering \textit{repeatedly jointly strongly connected} graphs, as done in \cite{fullmer:morse:2018:common_fixed_point_finite_family_paracontraction,nedic:ozdaglar:parrillo:10}. Notice that, this is usually coupled with the assumption of a common fixed points of the cost functions, not considered here.
One of the most important open question in these type of problems regards the existence of an equilibrium point. 
This work can be improved with a new assumption for the equilibrium existence, which is general and easy to check. 
  
%%%%%%%%%%%%%%%%%%%%%%%%%%%%%%%%%%%%%%%%%%%%%%%%%%%%%%%%%%%%%%%%%%%%%%%%%%%%%%%%%%%%%%%%%
\appendix

\subsection{Algorithm derivation}
In this section, we propose the complete derivation of the iterative algorithm that we called  TV--Prox--GNWE. We divide the derivation in two mains steps
\begin{enumerate}
\item Equilibria reformulation
\item Modified proximal point algorithm
\end{enumerate}

\subsubsection{Equilibria reformulation}
 the set of pn-ENWE, defined by the two equalities in \eqref{eq:cond_pnENE}, can be equivalently rephrased as the set of fixed points of a suitable mappings.
First, we introduce the block-diagonal proximal operator 
\begin{equation} \label{eq:group_proximity_operator}
\boldsymbol{\prox_{f}} \left(
%\bold z
\begin{bmatrix}
z_1 \\
\vdots \\
z_N
\end{bmatrix}
\right) := 
%\col(\textrm{prox}_{f_i}(z_i))_{i\in\ca N})\, .
\begin{bmatrix}
\textrm{prox}_{\bar f_1}(z_1) \\
\vdots\\
\textrm{prox}_{\bar f_N}(z_N)
\end{bmatrix}.
\end{equation}

In \eqref{eq:cond_pnENE}, the first equality is equivalent to 
$$\overline{\bld x} = \cap_{k>\bar k}\: \boldsymbol {\prox_{f}} (\boldsymbol A(k) \bld x - \bld\Lambda(k) \bld C^\top(k)\overline \sigma )\: ,$$ 
where $\bld A(k)\coloneqq A(k)\otimes I_n$ and $\bld \Lambda(k)=\diag((\alpha_i(k))_{i\in\ca N})\otimes I_n$.  The second equality holds true if and only if 
$\overline \sigma = \textrm{proj}_{\bR^M} (\overline{\sigma}+\bld C(k)\boldsymbol{\overline x}-c(k)) $.

In order to describe via operators these two relations, we define the static mappings 
\begin{equation}\label{eq:cal_F_map}
\boldsymbol{\ca R } :=\mathrm{diag} (\boldsymbol {\prox_{f}}, \:\textrm{proj}_{\bR^M_{\geq 0}})
\end{equation} 
and  the time-varying affine one $\boldsymbol{\ca G}_k:\bR^{nN+M}\rightarrow \bR^{nN+M}$ as
\begin{equation}\label{eq:ca_G_map}
\begin{split}
\boldsymbol{\ca G}_k(\cdot) &:=\boldsymbol{G} \,\cdot + \begin{bmatrix}
\boldsymbol 0\\c(k)
\end{bmatrix}\\
  &:= \begin{bmatrix}
\boldsymbol A(k) & -\bld \Lambda(k) \bld C^\top (k)\\
C(k) & I
\end{bmatrix}\cdot - \begin{bmatrix}
\boldsymbol 0\\c(k)
\end{bmatrix}\:.  
\end{split}
\end{equation}

As a result, the dynamics of the game result equal to
\begin{equation}
\label{eq:dyn_TV_constr}
\begin{bmatrix}
\bld x(k+1) \\
\sigma (k+1)
\end{bmatrix} = \bld{\ca R}\circ \bld{\ca G}_k\left(\begin{bmatrix}
\bld x(k) \\
\sigma (k)
\end{bmatrix} \right)\:.
\end{equation}

We exploit this new compact form to describe the set of pn--ENWE via the fixed points of $\bld{\ca R}\circ\boldsymbol{\ca G}_k$. In particular, by Definiton~\ref{def:pn-ENWE}, a pair $(\bar {\bld x}, \bar \sigma )$ is a pn--ENWE of the game in \eqref{eq:game} if and only if $\col((\bar {\bld x}, \bar \sigma ))\in\cap_{k>\bar k} \fix(\bld{\ca R}\circ\boldsymbol{\ca G}_k)$. Furthermore, from Lemma~\ref{lemma:GNWE_as_fixed_point}, we also know that a pn--ENWE is a p--GNWE of the original game. So, we focus on the design of an algorithm converging to the subset $\ca E$ for which we can take advantage of this new formulation.

A useful tool to solve fixed point seeking problem is to reformulate it as a zero finding problem, as done in the next lemma, see \cite[Ch.~26]{Bauschke2010:ConvexOptimization}.
\smallskip
\begin{lemma}[{\cite[Prop.~26.1 (iv)]{Bauschke2010:ConvexOptimization}}] \label{lemma:fixed_point_as_zero}
Let $\bld{\ca B} : = F \times N_{\bR^{M}_{\geq 0}}$, with $F:=\prod_{i=1}^N \partial \bar f_i$. Then,
$$\fix \left(\boldsymbol{\ca R } \circ \boldsymbol{\ca G }_k \right) = \textrm{zer}\left( \bld{\ca A}_k  \right)\:,$$ where $\bld{\ca A}_k\coloneqq\bld{\ca B} + \Id - \boldsymbol{\ca G }_k$.~\hfill\QEDopen
\end{lemma}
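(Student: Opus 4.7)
The plan is to identify $\boldsymbol{\ca R}$ as the resolvent of $\bld{\ca B}$ and then apply the standard fixed-point/zero equivalence for resolvents of maximally monotone operators.

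First I would verify the identification $\boldsymbol{\ca R} = J_{\bld{\ca B}} = (\Id + \bld{\ca B})^{-1}$. Since $\bar f_i$ is convex and lower semi-continuous (Standing Assumption~\ref{ass:cost_function}), $\partial \bar f_i$ is maximally monotone, and by definition $\mathrm{prox}_{\bar f_i} = J_{\partial \bar f_i}$. Likewise, $\bR^{M}_{\geq 0}$ is non-empty, closed and convex, so $N_{\bR^M_{\geq 0}}$ is maximally monotone and $\proj_{\bR^M_{\geq 0}} = J_{N_{\bR^M_{\geq 0}}}$. Because the resolvent of a separable/block-diagonal maximally monotone operator is the block-diagonal of the resolvents of the blocks, we obtain
\begin{equation*}
\boldsymbol{\ca R} \;=\; \diag\bigl(\boldsymbol{\prox_f},\,\proj_{\bR^M_{\geq 0}}\bigr) \;=\; J_{F\times N_{\bR^M_{\geq 0}}} \;=\; J_{\bld{\ca B}}.
\end{equation*}

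Next I would apply the elementary chain of equivalences that characterises fixed points of a resolvent composed with an arbitrary single-valued mapping. Writing $z := \col(\bld x,\sigma)$, one has
\begin{align*}
z \in \fix(\boldsymbol{\ca R}\circ \boldsymbol{\ca G}_k)
&\;\iff\; z = J_{\bld{\ca B}}\bigl(\boldsymbol{\ca G}_k(z)\bigr) \\
&\;\iff\; \boldsymbol{\ca G}_k(z) \in (\Id+\bld{\ca B})(z) \\
&\;\iff\; \boldsymbol{\ca G}_k(z) - z \in \bld{\ca B}(z) \\
&\;\iff\; 0 \in \bld{\ca B}(z) + z - \boldsymbol{\ca G}_k(z) \;=\; \bld{\ca A}_k(z),
\end{align*}
which is exactly the claim $\fix(\boldsymbol{\ca R}\circ\boldsymbol{\ca G}_k) = \zer(\bld{\ca A}_k)$. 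The second equivalence uses the definition of the resolvent as the single-valued inverse of $\Id+\bld{\ca B}$ on its range; the third is a rearrangement; the fourth uses that $\boldsymbol{\ca G}_k$ is single-valued so subtracting it into the inclusion is legitimate.

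The only potential obstacle is technical rather than conceptual: ensuring that $\bld{\ca B} = F\times N_{\bR^M_{\geq 0}}$ is maximally monotone so that $J_{\bld{\ca B}}$ is indeed single-valued with full domain. This follows because the Cartesian product of maximally monotone operators is maximally monotone, and the resolvent of a maximally monotone operator on a real Hilbert space is a full-domain, single-valued, firmly nonexpansive mapping. Since this is precisely the content of \cite[Prop.~26.1 (iv)]{Bauschke2010:ConvexOptimization} specialised to our $\bld{\ca B}$ and $\boldsymbol{\ca G}_k$, the proof reduces to noting that all the hypotheses of that proposition are met in our setting, which the verification above confirms.
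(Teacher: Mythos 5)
Your proof is correct and is exactly the standard argument behind the citation the paper uses in place of a proof: identifying $\boldsymbol{\ca R}$ as the resolvent $\mathrm{J}_{\bld{\ca B}}$ and unwinding $z=\mathrm{J}_{\bld{\ca B}}(\boldsymbol{\ca G}_k(z))$ into $0\in\bld{\ca B}(z)+z-\boldsymbol{\ca G}_k(z)$ is precisely the specialization of \cite[Prop.~26.1 (iv)]{Bauschke2010:ConvexOptimization} with $A=\bld{\ca B}$ and $B=\Id-\boldsymbol{\ca G}_k$. No gaps; the maximal monotonicity check you add for single-valuedness of the resolvent is the right (and only) technical point to verify.
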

\smallskip

\subsubsection{Modified proximal point algorithm}
we describe in details the passages to develop the iterative  algorithm solving the zero finding problem associated to the operator $\bld{\ca A}_k$, and, as a consequence, the original one of finding pn--ENWE of \eqref{eq:game}. We adopt a modified version of the \textit{proximal point algorithm} (PPP) (see \cite[Prop.~23.39]{Bauschke2010:ConvexOptimization} for its standard formulation).
In particular, the update rule is a preconditioned version of the PPP algorithm proposed in \cite[Eq.~4.18]{briceno:davis:f_b_half_algorithm}, after defining $\bld \varpi:=\col(\bld x(k),\,\sigma (k))$ and $\bld \varpi^+:=\col(\bld x(k+1),\,\sigma (k+1))$, it can be rewritten as 
\smallskip
\begin{subequations}
\label{eq:PPP_update_rule}
\begin{align}
\label{eq:PPP_update_rule1}
\tilde{\bld \varpi} &=\mathrm{J}_{\Phi^{-1}(k) \bld{\ca A}_k } \bld \varpi \\
\label{eq:PPP_update_rule2}
\bld \varpi^+ &= \bld \varpi + \gamma(k) \bld{\bar Q}(k)\Phi(k) (\tilde{\bld \varpi}-\bld \varpi)
\end{align}  
\end{subequations}
\smallskip
where and $\gamma(k)>0$  is the step--size of the algorithm and $\bld{\bar Q}(k)\coloneqq \diag(\bld Q(k),I)$. 
The preconditioning matrix is chosen as
\smallskip
\begin{equation}
\label{eq:preconditioning_matrix}
\Phi(k) := \begin{bmatrix}
\bld{\delta}^{-1}(k)+\bld A(k) & -\bld \Lambda(k) C(k)^\top\\
C(k) & \beta(k) I_M 
\end{bmatrix}
\end{equation}
\smallskip
where  $\beta(k)\in\bR_{>0}$ and $\bld \delta(k)\coloneqq \diag((\delta_i(k))_{i\in\ca N})\otimes I_n$. The \textit{self-adjoint} and \textit{skew symmetric} components are defined as $U(k) := (\Phi(k)+\Phi^\top(k))/2$ and $S(k):=(\Phi(k)-\Phi^\top(k))/2$. Due to the non symmetric preconditioning the resolvent operator takes the form
$$\mathrm{J}_{\Phi^{-1}(k) \bld{\ca A}_k }:=\mathrm{J}_{U^{-1}(k) (\bld{\ca A}_k +S(k))}(\Id + U^{-1}(k)S(k))\:.$$

The parameters $\bld \delta(k)$ and $\beta(k)$ in the preconditioning have to be chosen such that $U(k)\succ 0 $ and $\lVert \bld{\bar Q}(k) U(k) \rVert\leq \gamma^{-1}(k)$. This can be done via the \textit{Gerschgorin Circle Theorem} for partitioned matrices (more stringent but more involved bounds can be obtained via \cite[Th.~2.1]{melman:2010:generalization_gershgorin_disk}). The resulting bounds are reported in \eqref{eq:bound_param}.

Using a reasoning akin to the one in \cite[Proof of Th.~4.2]{briceno:davis:f_b_half_algorithm}, one can show that, at every time instant $k$, the set of fixed points of the mapping describing the update in  \eqref{eq:PPP_update_rule1}--\eqref{eq:PPP_update_rule2} 
coincides with $\zer(\bld{\ca A}_k)$.

Finally, we are ready for the complete derivation of the algorithm by explicitly compute the local update rules of the agents and of the central coordinator. We omit the time dependency in the following formulas.

First, we focus on  \eqref{eq:PPP_update_rule1} , so
\begin{subequations}
\begin{align}
\tilde{\bld \varpi} = \mathrm{J}_{U^{-1} (\bld{\ca A} +S)}(\Id + U^{-1}S) \bld \varpi \\
\tilde{\bld \varpi} + U^{-1} (\bld{\ca A} +S)\tilde{\bld \varpi} \ni \bld \varpi + U^{-1}S\bld \varpi\\
\bld 0 \in U(\tilde{\bld \varpi}- \bld \varpi) + \bld{\ca A}\tilde{\bld \varpi} +S(\tilde{\bld \varpi}- \bld \varpi)\\\label{eq:PPP_update_expr}
\bld 0 \in \Phi(\tilde{\bld \varpi}- \bld \varpi) + \bld{\ca A}\tilde{\bld \varpi} 
\end{align}
\end{subequations}  
  
By solving the first row block of \eqref{eq:PPP_update_expr}, i.e. $\bld 0 \in (\bld \delta^{-1} +\bld A)(\tilde{ \bld x} - \bld x) - \bld \Lambda C^\top(\tilde \sigma-\sigma) +F(\tilde{ \bld x})+\tilde{ \bld x} -\bld A \tilde{ \bld x} + \bld \Lambda C^\top \tilde \sigma  $ , we obtain
\begin{align*}
\bld 0_{nN} &\in \bld \delta^{-1}(\tilde{ \bld x} - \bld x) -\bld A \bld x + \bld \Lambda C^\top\sigma +F(\tilde{ \bld x})+\tilde{ \bld x}\\
\bld 0_{nN} &\in (\bld \delta^{-1}+I)\tilde{ \bld x} +F(\tilde{ \bld x}) - \bld \delta^{-1}\bld x -\bld A \bld x + \bld \Lambda C^\top\sigma \:.
\end{align*} 
Let us define, with a small abuse of notation, the matrix $\frac{1}{ \delta^{-1}+1}\coloneqq \diag\left(\big(\frac{1}{\bld \delta_i^{-1}+1}\big)_{i\in\ca N}\right)\otimes I_n$,
then we attain
\begin{align}\nonumber
\bld 0_{nN} &\in \tilde{ \bld x} +\textstyle{\frac{1}{\bld \delta^{-1}+1}} F(\tilde{ \bld x}) +\textstyle{\frac{1}{\bld \delta^{-1}+1}}\big[ \bld \Lambda C^\top\sigma - \bld \delta^{-1}\bld x -\bld A \bld x  \big]\\ \label{eq:first_row_block}
 \tilde{ \bld x} &=  \mathrm{J}_{\scriptstyle{\frac{1}{\bld \delta^{-1}+1}} F} \left( \textstyle{\frac{1}{\bld \delta^{-1}+1}}\big[ \bld \delta^{-1}\bld x +\bld A \bld x  -\bld \Lambda C^\top\sigma \big]  \right)\:.
\end{align}

%\alert{Add the $c$ in the formula should just be $-c$.}\\
The second row block instead reads as $\bld 0 \in C (\tilde{ \bld x} - \bld x) +\beta (\tilde \sigma-\sigma) +N_{\bR^{M}_{\geq 0}}(\tilde \sigma) + \tilde \sigma - C \tilde{ \bld x} +c $, and leads to 
\begin{subequations}
\begin{align}\label{eq:second_row_block 1}
\bld 0_{M} &\in -C\bld x +\beta (\tilde \sigma-\sigma) +N_{\bR^{M}_{\geq 0}}(\tilde \sigma) +c  \\\label{eq:second_row_block 2}
\tilde \sigma &= \mathrm{J}_{N_{\bR^{M}_{\geq 0}}}\left( \sigma+\textstyle{ \frac{1}{\beta} } (C\bld x -c)  \right)
\end{align}
\end{subequations}

Combining \eqref{eq:first_row_block} and \eqref{eq:second_row_block 2} together with \eqref{eq:PPP_update_rule2}, leads to the final formulation for the  TV--Prox--GNWE, its dynamics are shown in \eqref{eq:tv_prox-GNWE 1}~--~\eqref{eq:tv_prox-GNWE 4}.

\subsection{Convergence proof of TV--Prox--GNWE}
\label{app:TV_Prox_GNWE}

In order to simplify the proofs proposed in the following, let us introduce some useful definition that will be adopted thorough the whole section. 
 We define the two scalars $L_k$ and $m_k$, the former is the Lipschitz constant of $S(k)$ and the latter is such that $m_k\lVert x \rVert^2<\langle U(k)x,x\rangle$, this also implies that $\lVert U^{-1}(k) \rVert\leq m_k^{-1}$.
Next, we define the time-varying matrix $K(k)\coloneqq \overline{\bld Q}_k U(k)$  and  the scalars $\rho \coloneqq m_k^{-1}L_k$, $q_m \coloneqq \min_{i} [\bld{\overline{Q}}_k]_{ii}$ and $M_k\geq \lVert U(k) \rVert$. Notice that, without loss of generality, we can always choose the normalized version of the left Perron Frobenius eigenvector $q(k)$ of the matrix $A(k)$, so $ \max_{i} [\bld{\overline{Q}}_k]_{ii} \leq 1$.

In the following, we also omit the time dependency of the operators when this does not lead to ambiguities. The proofs follow similar steps to the ones in \cite[Prop.2.1~and~4.2]{briceno:davis:f_b_half_algorithm}, where the case of a static communication network is considered.
\smallskip

\begin{lemma}
\label{lem:QNE_operator}
For all $k\in\bN$, consider  the time-varying operator \begin{equation}
\label{eq:T_Phi_def}
\ca T_{\Phi}:= \mathrm{J}_{\Phi^{-1}\bld{\ca A}}+U^{-1}S(\mathrm{J}_{\Phi^{-1}\bld{\ca A}}-\Id)\:,
\end{equation}
then the following hold: 
\begin{enumerate}[(i)]
\item $\ca T_{\Phi}$is quasi-nonexpansive in the space $\ca H_{K}$,
\item if $L_k\leq m_k $, then $\fix(\ca T_{\Phi}) = \zer(\bld{\ca A})$ .\hfill\QEDopen
\end{enumerate} 
\end{lemma}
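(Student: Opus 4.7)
The plan is to follow the template of \cite[Prop.~2.1]{briceno:davis:f_b_half_algorithm}, adapted to the time-varying, Perron--Frobenius-weighted setting here. I would tackle (ii) first, as it is the lighter claim. Using the identity $\ca T_{\Phi}(x)-x=(\Id+U^{-1}S)(J(x)-x)$ with $J:=\mathrm{J}_{\Phi^{-1}\bld{\ca A}}$, one inclusion is immediate: if $0\in\bld{\ca A}(y)$ then $J(y)=y$ and $\ca T_{\Phi}(y)=y$. Conversely, if $\ca T_{\Phi}(y)=y$, multiplying the identity by $U$ yields $\Phi(y-J(y))=(U+S)(y-J(y))=0$; since $U\succ 0$ and $S$ is skew-adjoint, $\Phi$ is injective, hence $y=J(y)\in\zer(\bld{\ca A})$. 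The hypothesis $L_k\leq m_k$ enters here as the condition that ensures $\lVert U^{-1}S\rVert\leq 1$, which in turn makes $J$ itself well-defined (single-valued, full domain) via Minty's theorem applied to $U^{-1}(\bld{\ca A}+S)$ in $\ca H_U$.

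For (i), the key algebraic observation is that, writing $G$ for the matrix part of $\bld{\ca G}_k$, the skew component $S=(G-G^\top)/2$ satisfies $\bld{\ca A}+S=\bld{\ca B}+(\Id-G_{\mathrm{sym}})+\mathrm{const}$ with $G_{\mathrm{sym}}=(G+G^\top)/2$. Thus $\bld{\ca A}+S$ is the sum of a maximally monotone operator and a self-adjoint affine one, and the bounds~\eqref{eq:bound_param} guarantee $\Id-G_{\mathrm{sym}}\succeq 0$, so $\bld{\ca A}+S$ is maximally monotone. Fix $\bar{x}\in\zer(\bld{\ca A})$ and $u=J(x)$. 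The resolvent identity yields $\Phi(x-u)\in\bld{\ca A}(u)$, hence $\Phi(x-u)+Su\in(\bld{\ca A}+S)(u)$, while $S\bar{x}\in(\bld{\ca A}+S)(\bar{x})$. Monotonicity together with the skew identity $\langle S(u-\bar{x}),u-\bar{x}\rangle=0$ produces the base estimate
$$\langle \Phi(x-u),\,u-\bar{x}\rangle\ \geq\ 0.$$

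From this I would expand $\lVert x-\bar{x}\rVert_{K}^{2}-\lVert\ca T_{\Phi}(x)-\bar{x}\rVert_{K}^{2}$ using $\ca T_{\Phi}(x)-\bar{x}=(u-\bar{x})-U^{-1}S(x-u)$ and $\Phi=U+S$. The cross-terms generated by the skew component cancel against the base estimate (after left-multiplication by $\bld{\bar Q}$, which is what promotes an $\ca H_U$ descent into an $\ca H_K$ descent thanks to the positivity of the diagonal entries $q_i$), leaving a residual of the form $\lVert x-u\rVert_{K}^{2}-\lVert U^{-1}S(x-u)\rVert_{K}^{2}$. Invoking $\lVert U^{-1}\rVert\leq m_k^{-1}$, $\lVert S\rVert\leq L_k$, and $L_k\leq m_k$, this residual is non-negative and (i) follows.

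The main obstacle is that $K=\bld{\bar Q}\,U$ is not self-adjoint in general, since $\bld{\bar Q}$ and $U$ need not commute; I would interpret $\lVert\cdot\rVert_K$ via the symmetric part $(K+K^\top)/2$ and exploit the Perron--Frobenius identity $q(k)^{\top}A(k)=q(k)^{\top}$ (built into the choice of $\bld Q$) to keep the weighted cross-term cancellations consistent. The parameter bounds~\eqref{eq:bound_param1}--\eqref{eq:bound_param2} are precisely what is needed for the symmetric part of $K$ to remain positive definite and dominate the quadratic form $\langle x,Kx\rangle$. The auxiliary verification $\Id-G_{\mathrm{sym}}\succeq 0$ under~\eqref{eq:bound_param} is a short Gershgorin-type computation on the block matrix~\eqref{eq:preconditioning_matrix}; once these pieces are in place, the remainder is routine resolvent/monotonicity algebra.
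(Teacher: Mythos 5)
Your part (ii) is sound and actually more self-contained than the paper's, which simply invokes \cite[Prop.~2.1(1)]{briceno:davis:f_b_half_algorithm}: from $\ca T_{\Phi}(y)=y$ you get $(\Id+U^{-1}S)(J(y)-y)=\0$, hence $\Phi(J(y)-y)=\0$, and injectivity of $\Phi=U+S$ (since $\langle \Phi v,v\rangle=\langle Uv,v\rangle\geq m_k\lVert v\rVert^2$) forces $J(y)=y$, i.e.\ $y\in\zer(\bld{\ca A})$. Two small corrections there: the hypothesis $L_k\leq m_k$ is not what makes $J$ well defined --- Minty's theorem only needs $\bld{\ca A}+S$ maximally monotone, which holds because $S$ is a full-domain skew linear map regardless of its norm --- and in fact your direct argument never uses $L_k\leq m_k$ at all, so you prove slightly more than the cited proposition gives.

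The gap is in part (i), specifically in the claim that the bounds \eqref{eq:bound_param} guarantee $\Id-G_{\mathrm{sym}}\succeq 0$ and hence monotonicity of $\bld{\ca A}+S$ in the \emph{unweighted} inner product. The upper-left block of $\Id-G_{\mathrm{sym}}$ is $I-(\bld A+\bld A^\top)/2$, and since $\1^\top\big(I-(\bld A+\bld A^\top)/2\big)\1=0$ for any row-stochastic $A$, positive semidefiniteness would force $\big(I-(\bld A+\bld A^\top)/2\big)\1=\0$, i.e.\ $A^\top\1=\1$: double stochasticity, which Standing Assumption~4 deliberately does not impose (and no choice of $\delta_i,\beta,\gamma$ can repair this, since $\bld\delta^{-1}$ does not appear in $G$). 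Consequently your base estimate $\langle\Phi(x-u),u-\bar x\rangle\geq 0$ does not follow. The Perron--Frobenius weighting $\overline{\bld Q}$ is therefore not, as you describe it, a device for ``promoting an $\ca H_U$ descent into an $\ca H_K$ descent''; it is load-bearing from the very first step: the paper imports from \cite[Th.~5]{cenedese:2019:arXiv} that $\bld{\ca A}$ is maximally monotone in $\ca H_{\overline{\bld Q}}$, which rests on $q^\top A=q^\top$ making the symmetric part of $\bld Q(I-\bld A)$ positive semidefinite and on the choice $\alpha_i=q_i$ to handle the off-diagonal blocks. All of your monotonicity and skewness bookkeeping must be carried out in that weighted geometry (equivalently in $\ca H_K$ with $K=\overline{\bld Q}U$) before the resolvent/monotonicity algebra starts; once it is, you are exactly in the setting of \cite[Prop.~2.1(3)]{briceno:davis:f_b_half_algorithm} with $\ca M=U^{-1}(\bld{\ca A}+S)$ and $\ca D=-U^{-1}S$, which is the paper's (one-line) proof. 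A secondary issue: your residual $\lVert x-u\rVert_K^2-\lVert U^{-1}S(x-u)\rVert_K^2$ needs $L_k\leq m_k$ to be nonnegative, but the lemma asserts (i) without that hypothesis, so as written your argument proves a weaker statement than claimed.
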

\begin{proof}
(i) From the proof of \cite[Th.~5]{cenedese:2019:arXiv}, the operator $\bld{\ca A} \coloneqq  \bld{\ca B}+\Id-\bld{\ca G}_k$ is maximally monotone in $\ca H_{\overline{\bld Q}}$ for all $k\in\bN$. The operator $S$ is also maximally monotone, since it is a skew-symmetric matrix, \cite[Ex.~20.30]{Bauschke2010:ConvexOptimization}. Next, we define the two auxiliary operators $\ca M = U^{-1}(\bld{\ca A}+S)$ and $\ca D = - U^{-1}S$. It is easy to see that $\ca A$ is monotone and $\ca D$ is monotone and  $\rho$-Lipschitz, , in $\ca H_K$. From \cite[Prop.~2.1~(3)]{briceno:davis:f_b_half_algorithm}, we obtain that for $\varpi^*\in \fix(\ca T_\Phi)$ it holds for some $\bar\gamma>0$
\begin{equation}
\label{eq:T_phi_QNE}
\lVert \ca T_{\Phi} \varpi -\varpi^* \rVert^2_{K} \leq \lVert \varpi-\varpi^* \rVert_{K} -(1+\rho^2) \lVert \varpi - x \rVert^2_{K}
\end{equation}
where $x \coloneqq \mathrm{J}_{\ca M} (\varpi- \ca D \varpi) $. Hence, we conclude that $\ca T_\Phi$ is quasi-nonexpansive in $\ca H_K$.
\smallskip 

(ii) If $L_k\leq m_k $, then the result follows directly from  \cite[Prop.~2.1~(1)]{briceno:davis:f_b_half_algorithm}, where we considered $A\coloneqq \ca M$, $B_1 = 0$ and $B_2\coloneqq \ca D$. 
\end{proof}
\subsection*{Proof of Theorem~\ref{th:TV_Prox_GNWE}}
 Consider $\ca T_{\Phi}$, as in \eqref{eq:T_Phi_def}, then the following relation always holds
\begin{equation}
\label{eq:relation_T_J}
\overline{\bld Q} U(\Id - \ca T_{\Phi} )(x) = \overline{\bld Q} \Phi ( x - \mathrm{J}_{\Phi^{-1} \bld{\ca A}})\, .
\end{equation}  
From Lemma~\ref{lem:QNE_operator} we know that $\ca T_{\Phi}$ is quasi-nonexpansive in $\ca H_{K}$, hence $\ca S:=(\Id+\ca T_{\Phi})/2$ belongs to the class $\mathfrak{I}$ in $\ca H_{K}$, \cite[Prop.~2.2(v)]{combettes:01}. From \cite[Prop.~4.1]{briceno:davis:f_b_half_algorithm} and \eqref{eq:relation_T_J}, we define the operator
\begin{equation}
\label{eq:W_def}
\begin{split}
\ca W_{\Phi} &: = \Id-\lVert K \rVert^{-1} K(\Id -\ca S) \\
&= \Id-\frac{1}{2}\lVert K \rVert^{-1} K (\Id -\ca T_{\Phi}) \: , \\ 
\end{split}
\end{equation}
belonging to $\mathfrak{ I}$ in $\ca H$ and $\fix(\ca W_{\Phi} ) = \fix(\ca T_{\Phi})=\zer(\bld{\ca A})$, where the last equality comes from Lemma~\ref{lem:QNE_operator}. The update rule in \eqref{eq:PPP_update_rule} can be rewritten as
\begin{equation}
\bld{\varpi}^+ = \bld{\varpi}+ 2 \gamma\lVert K \rVert (\ca W_{\Phi} \bld{\varpi} - \bld{\varpi} )\:,
\end{equation} 
where $\gamma \lVert K \rVert<1$ for all $k$, due to the choice of $\delta_i$ and $\beta$. Thus, from \cite[Th.~4.2(ii)~and~4.3]{combettes:01}, we have that $(\lVert \bld{\varpi}  - \ca W_{\Phi} \bld{\varpi} \rVert^2)_{k\in\bN}$ is summable and converges in $\ca H$ to an element $\overline{\bld{\varpi} }
\in \ca E_c$, hence to a pn-ENWE, if and only if every sequential cluster point of sequence belong to $\ca E_c$.  

Toward this aim, we define $x_\varpi \coloneqq \mathrm{J}_{\Phi^{-1} \bld{\ca A}} (\varpi)$ then we prove that $\varpi-x_\varpi \rightarrow 0$ and $x_\varpi \rightarrow \overline{x}_\varpi\in\zer(\bld{\ca A}) $, concluding the proof.

First, notice that, due to the choice of the coefficients $\bld \delta$ and $\beta$  in $\Phi(k)$, it holds 
\begin{equation}
\label{eq:conv varpi - T_phi}
\begin{split}
\lVert \bld{\varpi}  - \ca T_{\Phi} \bld{\varpi} \rVert^2_{K} & \leq    \lVert K^{-1}\rVert  \lVert  \bld{\varpi}  - \ca T_{\Phi} \bld{\varpi} \rVert ^2 \\
&\leq  4 m_k^{-1} M_k^2 \lVert  \bld{\varpi}  - \ca W_{\Phi} \bld{\varpi} \rVert ^2\rightarrow 0\:. 
\end{split}
\end{equation}
Let us define $x_\varpi \coloneqq \mathrm{J}_{\bar\gamma \ca A} (\varpi-\bar \gamma U^{-1} S \varpi)$,
Moreover, from \eqref{eq:T_phi_QNE}, it follows 
\begin{equation*}
\begin{split}
(1-\rho^2) \lVert \varpi&-x_{\varpi} \rVert_{K}  \leq \lVert \varpi -\varpi^*\rVert_{K} - \lVert \ca T_{\Phi} \varpi -\varpi\rVert_{K} \\
&= - \lVert \ca T_{\Phi} \varpi -\varpi\rVert_{K} -2\langle \ca T_{\Phi} \varpi -\varpi, \varpi -\varpi^* \rangle_{K}\\
&\leq - \lVert \ca T_{\Phi} \varpi -\varpi\rVert_{K} -2M_k\lVert \ca T_{\Phi} \varpi -\varpi\lVert_{K} \rVert \varpi -\varpi^* \rVert
\end{split}
\end{equation*}
The above inequality leads to 
\begin{align}
\label{eq:conv varpi-x_varpi}
(1-\rho^2) m_k q_{m} \lVert \varpi-x_{\varpi} \rVert &\leq  - \lVert \ca T_{\Phi} \varpi -\varpi\rVert_{K}\\ \nonumber
& -2M_k\lVert \ca T_{\Phi} \varpi -\varpi\lVert_{K} \rVert \varpi -\varpi^* \rVert 
\end{align} 
The sequence $(\lVert \varpi(k) -\varpi^*  \rVert )_{k\in\bN}$ is bounded and from \eqref{eq:conv varpi - T_phi} and  \eqref{eq:conv varpi-x_varpi}, we deduce that $\varpi-x_{\varpi} \rightarrow 0$. 
Notice also that 
$$ \lVert \overline{Q}\Phi (\varpi-x_{\varpi})\rVert \leq (M_k+L) \lVert  \varpi-x_{\varpi} \rVert \rightarrow 0\:.$$
From the definition of $x_\varpi$, it follows
$$ u_k \coloneqq \overline{Q}\Phi( \varpi - x_\varpi ) \in \bld{\ca A} x_\omega  \: . $$
Thus, since $u_k\rightarrow 0$, we conclude that $x_\varpi \rightarrow \overline{x}_\varpi\in\zer(\bld{\ca A}) $.

\hfill\QED

%Since $\varpi-x_{\varpi} \rightarrow 0$, we conclude that all the equilibrium points of  $x_\varpi$ belongs to  $\zer( \bld{\ca A}  )$, concluding the proof.

%% BIBLIOGRAPHY %%%%%%%%%%%%%%%%%%%%%%%%%%%%%%%%%%%%%%%%%%%%%%%%%%%%%%%%%%%%%
\bibliographystyle{IEEEtran}
\bibliography{diary_bibliography,librarySG}

% Generated by IEEEtran.bst, version: 1.14 (2015/08/26)
\begin{thebibliography}{10}
\providecommand{\url}[1]{#1}
\csname url@samestyle\endcsname
\providecommand{\newblock}{\relax}
\providecommand{\bibinfo}[2]{#2}
\providecommand{\BIBentrySTDinterwordspacing}{\spaceskip=0pt\relax}
\providecommand{\BIBentryALTinterwordstretchfactor}{4}
\providecommand{\BIBentryALTinterwordspacing}{\spaceskip=\fontdimen2\font plus
\BIBentryALTinterwordstretchfactor\fontdimen3\font minus
  \fontdimen4\font\relax}
\providecommand{\BIBforeignlanguage}[2]{{%
\expandafter\ifx\csname l@#1\endcsname\relax
\typeout{** WARNING: IEEEtran.bst: No hyphenation pattern has been}%
\typeout{** loaded for the language `#1'. Using the pattern for}%
\typeout{** the default language instead.}%
\else
\language=\csname l@#1\endcsname
\fi
#2}}
\providecommand{\BIBdecl}{\relax}
\BIBdecl

\bibitem{grammatico:2017:opinion_dynamics_are_prox}
S.~{Grammatico}, ``Opinion dynamics are proximal dynamics in multi-agent
  network games,'' in \emph{2017 IEEE 56th Annual Conference on Decision and
  Control (CDC)}, Dec 2017, pp. 3835--3840.

\bibitem{Ghaderi2014}
J.~Ghaderi and R.~Srikant, ``Opinion dynamics in social networks with stubborn
  agents: Equilibrium and convergence rate,'' \emph{Automatica}, vol.~50, pp.
  3209--3215, 2014.

\bibitem{etesami:basar:15}
S.~R. Etesami and T.~Ba\c{s}ar, ``Game-theoretic analysis of the
  hegselmann-krause model for opinion dynamics in finite dimensions,''
  \emph{IEEE Trans. on Automatic Control}, vol.~60, no.~7, pp. 1886--1897,
  2015.

\bibitem{jaina:walrand:10}
R.~Jaina and J.~Walrand, ``An efficient {Nash}-implementation mechanism for
  network resource allocation,'' \emph{Automatica}, vol.~46, pp. 1276--1283,
  2010.

\bibitem{cenedese:belg:gra:cao:2019:Asynch_ECC}
C.~{Cenedese}, G.~{Belgioioso}, S.~{Grammatico}, and M.~{Cao}, ``An
  asynchronous, forward-backward, distributed generalized nash equilibrium
  seeking algorithm,'' in \emph{2019 18th European Control Conference (ECC)},
  June 2019, pp. 3508--3513.

\bibitem{callaway:09}
D.~Callaway, ``Tapping the energy storage potential in electric loads to
  deliver load following and regulation, with application to wind energy,''
  \emph{Energy Conversion and Management}, vol.~50, no.~5, pp. 1389--1400,
  2009.

\bibitem{dorfer:simpson-porco:bullo:16}
F.~D\"{o}rfler, J.~Simpson-Porco, and F.~Bullo, ``Breaking the hierarchy:
  Distributed control and economic optimality in microgrids,'' \emph{IEEE
  Trans. on Control of Network Systems}, vol.~3, no.~3, pp. 241--253, 2016.

\bibitem{grammatico:18tcns}
S.~Grammatico, ``Proximal dynamics in multi-agent network games,'' \emph{IEEE
  Trans. on Control of Network Systems}, vol.~5, no.~4, pp. 1707--1716, Dec
  2018.

\bibitem{martinez:bullo:cortes:frazzoli:07}
S.~Mart\'{i}nez, F.~Bullo, J.~Cort\'{e}s, and E.~Frazzoli, ``On synchronous
  robotic networks -- {Part} i: Models, tasks, and complexity,'' \emph{IEEE
  Trans. on Automatic Control}, vol.~52, pp. 2199--2213, 2007.

\bibitem{simonetto:leus:2014:distributed_ML_sensor_network}
A.~{Simonetto} and G.~{Leus}, ``Distributed maximum likelihood sensor network
  localization,'' \emph{IEEE Transactions on Signal Processing}, vol.~62,
  no.~6, pp. 1424--1437, March 2014.

\bibitem{nedic:2015:distr_opt_over_time_var_directed_graphs}
A.~{Nedić} and A.~{Olshevsky}, ``Distributed optimization over time-varying
  directed graphs,'' \emph{IEEE Transactions on Automatic Control}, vol.~60,
  no.~3, pp. 601--615, March 2015.

\bibitem{nedic:alshevsky:2017:achieving_geometric_convergence_distr_opt}
A.~Nedić, A.~Olshevsky, and W.~Shi, ``Achieving geometric convergence for
  distributed optimization over time-varying graphs,'' \emph{SIAM Journal on
  Optimization}, vol.~27, no.~4, pp. 2597--2633, 2017.

\bibitem{li:marden:2012:games_for_distributed_optimization}
N.~{Li} and J.~R. {Marden}, ``Designing games for distributed optimization with
  a time varying communication graph,'' pp. 7764--7769, Dec 2012.

\bibitem{facchinei:kanzow:07}
F.~Facchinei and C.~Kanzow, ``Generalized {Nash} equilibrium problems,''
  \emph{A Quarterly Journal of Operations Research, Springer}, vol.~5, pp.
  173--210, 2007.

\bibitem{yi_pavel:2017:disribiuted_primal_dual_conf}
P.~{Yi} and L.~{Pavel}, ``A distributed primal-dual algorithm for computation
  of generalized {N}ash equilibria via operator splitting methods,'' in
  \emph{2017 IEEE 56th Annual Conference on Decision and Control (CDC)}, Dec
  2017, pp. 3841--3846.

\bibitem{belgioioso:grammatico:2018:proj_grad_algorithm_for_GNE}
G.~{Belgioioso} and S.~{Grammatico}, ``Projected-gradient algorithms for
  generalized equilibrium seeking in aggregative games arepreconditioned
  forward-backward methods,'' pp. 2188--2193, June 2018.

\bibitem{koshal:nedic:shanbhag:16}
J.~Koshal, A.~Nedi\'c, and U.~Shanbhag, ``Distributed algorithms for
  aggregative games on graphs,'' \emph{Operations Research}, vol.~64, no.~3,
  pp. 680--704, 2016.

\bibitem{cenedese:kawano:grammatico:cao:2018:time_varying_proximal_dynamics}
C.~{Cenedese}, Y.~{Kawano}, S.~{Grammatico}, and M.~{Cao}, ``Towards
  time-varying proximal dynamics in multi-agent network games,'' in \emph{2018
  IEEE Conference on Decision and Control (CDC)}, Dec 2018, pp. 4378--4383.

\bibitem{cenedese:2019:arXiv}
C.~{Cenedese}, G.~{Belgioioso}, Y.~{Kawano}, S.~{Grammatico}, and M.~{Cao},
  ``{Asynchronous and time-varying proximal type dynamics multi-agent network
  games},'' \emph{arXiv e-prints}, p. arXiv:1909.11203, Sep 2019.

\bibitem{fullmer:morse:2018:common_fixed_point_finite_family_paracontraction}
D.~{Fullmer} and A.~S. {Morse}, ``A distributed algorithm for computing a
  common fixed point of a finite family of paracontractions,'' \emph{IEEE
  Transactions on Automatic Control}, vol.~63, no.~9, pp. 2833--2843, Sep.
  2018.

\bibitem{combettes:01}
P.~Combettes, ``Quasi-{F}ej\'erian analysis of some optimization algorithms,''
  in \emph{Studies in Computational Mathematics, Inherently parallel algorithms
  in feasibility and optimization and their applications}.\hskip 1em plus 0.5em
  minus 0.4em\relax Elsevier, 2001, pp. 115--152.

\bibitem{bauschke:combettes}
H.~H. Bauschke and P.~L. Combettes, \emph{Convex analysis and monotone operator
  theory in {Hilbert} spaces}.\hskip 1em plus 0.5em minus 0.4em\relax Springer,
  2010.

\bibitem{cominetti:facchinei:lasserre}
R.~Cominetti, F.~Facchinei, and J.~Lasserre, \emph{Modern optimization
  modelling techniques}.\hskip 1em plus 0.5em minus 0.4em\relax Birkh\"auser,
  2010.

\bibitem{belgioioso_grammatico:2017:semi_decentralized_NE_seeking}
G.~{Belgioioso} and S.~{Grammatico}, ``Semi-decentralized {N}ash equilibrium
  seeking in aggregative games with separable coupling constraints and
  non-differentiable cost functions,'' \emph{IEEE Control Systems Letters},
  vol.~1, no.~2, pp. 400--405, Oct 2017.

\bibitem{rosen:65}
J.~Rosen, ``Existence and uniqueness of equilibrium points for concave n-person
  games,'' \emph{Econometrica}, vol.~33, pp. 520--534, 1965.

\bibitem{combettes:yamada:15}
P.~L. Combettes and I.~Yamada, ``Compositions and convex combinations of
  averaged nonexpansive operators,'' \emph{Journal of Mathematical Analysis and
  Applications}, pp. 55--70, 2015.

\bibitem{nedic:ozdaglar:parrillo:10}
A.~Nedi\'c, A.~Ozdaglar, and P.~Parrillo, ``Constrained consensus and
  optimization in multi-agent networks,'' \emph{IEEE Trans. on Automatic
  Control}, vol.~55, no.~4, pp. 922--938, 2010.

\bibitem{Bauschke2010:ConvexOptimization}
H.~H. Bauschke and P.~L. Combettes, \emph{{Convex Analysis and Monotone
  Operator Theory in Hilbert Spaces}}, 2010.

\bibitem{briceno:davis:f_b_half_algorithm}
L.~Briceno-Arias and D.~Davis, ``Forward-backward-half forward algorithm for
  solving monotone inclusions,'' \emph{SIAM Journal on Optimization}, vol.~28,
  no.~4, pp. 2839--2871, 2018.

\bibitem{melman:2010:generalization_gershgorin_disk}
A.~Melman, ``Generalizations of gershgorin disks and polynomial zeros,''
  \emph{Proceedings of the American Mathematical Society}, vol. 138, no.~7, pp.
  2349--2364, 2010.

\end{thebibliography}
\balance
 
\end{document}